\DeclareMathOperator*{\argmin}{argmin}
\title{$\epsilon$-Optimal Multi-Agent Patrol using Recurrent Strategy}
\author{Deepak Mallya, Arpita Sinha, Leena Vachhani}
\newtheorem{thm}{Theorem}
\newtheorem{lem}{Lemma}
\newtheorem{cor}{Corollary}
\newtheorem{prob}{Problem}
\newtheorem{rem}{Remark}
\begin{document}
% \begin{frontmatter}
\thanks{D. Mallya, A. Sinha, and L. Vachhani are with the Center for Systems and Control Engineering, Indian Institute of Technology Bombay, Mumbai, INDIA {\texttt \{sdeepakmallya, arpita.sinha, leena.vachhani\}@iitb.ac.in}\\ Acknowledgement: Pradyumna Paruchuri has contributed to the proofreading.}

\maketitle
\begin{abstract}
    The multi-agent patrol problem refers to repeatedly visiting different locations in an environment using multiple autonomous agents. For over two decades, researchers have studied this problem in various settings. While providing valuable insights into the problem, the works in existing literature have not commented on the nature of the optimal solutions to the problem. We, first, show that an $\epsilon$-approximate recurrent patrol strategy exists for every feasible patrol strategy. Then, we establish the existence of a recurrent patrol strategy that is $\epsilon$-optimal solution to the General Patrol Problem. The factor $\epsilon$ is proportional to the discretization constant $D$, which can be arbitrarily small and is independent of the number of patrol agents and the size of the environment. This result holds for a variety of problem formulations already studied. 
    We also provide an algorithmic approach to determine $\epsilon$-approximate recurrent patrol strategy for a patrol strategy created by any method from the literature. 
    We perform extensive simulations in graphs based on real-life environments to validate the claims made in this work.    
\end{abstract}
% \end{frontmatter}
\section{Introduction} \label{sec_intro}
Colloquially, patrolling refers to repetitively visiting essential locations in an environment (city blocks/university campuses/military bases) for surveillance, carried out primarily as a deterrent against any intrusion by adversarial forces. Today, it is cheap and easy for adversaries of any size or form to cause damage. Thus, for any establishment to be deemed secure, it must have effective surveillance as a preventive measure against possible threats. Academia has been studying this problem for more than a couple of decades. Given the rapid advancements in robotics, vision, and communication capabilities, there are many patrolling robots available~\cite{smp-robotics,enovarobotics,goalie,rbwatcher,guardforce} that move autonomously in a given environment and detect/neutralize threats and anomalies. The study of assigning tasks/paths/routes to these robots is called the Multi-Robot Patrolling Problem (MRPP).

The existing works under the ambit of MRPP have varied in scope and the extent to which they address the task~\cite{basilico2022recent,samanta2022literature,10.1007/978-3-642-19170-1_15}. We highlight a few different aspects of the problem studied in the literature. The intruder is modelled explicitly in adversarial problem settings~\cite{agmon2011multi,SLESSLIN20191,10.1145/3555776.3577802} whereas in non-adversarial settings the intruder motion or action is ignored~\cite{10611463,nagavarapu2021generalizing}. The patrol environment can be the entire area~\cite{uav_patrol} or just the perimeter~\cite{agmon2011multi,10416787}, or the graph~\cite{10582411,10683971} induced by the road network. The patrol locations in the environment can be weighted~\cite{Pasqualetti_cooperativepatrolling,10.1007/978-3-030-66723-8_7} suggesting relative importance, or {constrained}~\cite{10.1007/978-3-319-93031-2_38,8814485,10149043} by the time interval between visits, or identical~\cite{10582411,afshani_et_al}. The patrol strategy is either deterministic~\cite{8814485,9561785} or probabilistic~\cite{SLESSLIN20191,10160923} and is either executed online~\cite{8015305,9561785}, or offline~\cite{10611463,10.1007/978-3-319-93031-2_38}. Based on the information sharing between the agents, the strategy is computed in either centralized~\cite{8814485,afshani_et_al} or decentralized~\cite{10.1145/3555776.3577802,10582411} manner. Various solutions have been proposed, such as graph theoretic~\cite{nagavarapu2021generalizing,8814485,afshani_et_al}, heuristic~\cite{emp_ana,10.1145/3555776.3577802}, task assignment~\cite{10582411,9561785}, negotiation based~\cite{10.1007/978-3-031-22953-4_2,8511293}, and learning based~\cite{10160923,jana2022deep}. Finally, both analytical~\cite{Pasqualetti_cooperativepatrolling,STRANDERS201363} and empirical~\cite{emp_ana,10.1145/3555776.3577802} results are available.

This work focuses on non-adversarial patrol of the environment represented as graphs using homogeneous robots. We now review the work available in the literature under this setting. The authors in~\cite{theo_ana} proved that the solution to the Traveling Salesman Problem is optimal for a single agent. It also established that the problem of minimizing maximum Idleness is NP-hard and commented on the trade-off between partitioned and cyclic strategies. The work in \cite{smith2011optimal} and~\cite{aksaray2015distributed} deal with temporal constraints on visits and study the problem as a finite automaton. In~\cite{gen_pat}, the analysis involved determining cyclic strategies. The authors restricted their work to outerplanar graphs. The layout was divided into tree graphs in~\cite{coop} to obtain a polynomial time solution with $8$-factor approximation. In~\cite{Pasqualetti_cooperativepatrolling}, the authors considered Weighted refresh time (Idleness), and they proposed a $2(1 + \gamma)\phi$-factor solution with non-intersecting routes where $\gamma$ and $\phi$ are edge-weight and node-weight ratios respectively. Two algorithms with $\log (n)$ and $\log (\phi)$ approximation factors were presented in~\cite{pers3} for the weighted latency (Idleness) setting. In \cite{pers1}, authors consider delaying departure from locations by introducing dwell time and providing switching conditions. Graph partitioning via clustering and simulated annealing is studied in~\cite{10.1007/978-3-319-93031-2_38}. The authors in \cite{8814485} consider weighted constraints on the latency (Idleness) and suggest a $\log (\phi)$ approximation for the same. An algorithm with an approximation factor proportional to the number of patrol agents is proposed in~\cite{10.1007/978-3-030-66723-8_7}. The graph is partitioned in~\cite{afshani_et_al} and~\cite{9837406}, and each partition is assigned an unequal number of agents to patrol. Even after a vibrant community of researchers studying the problem for a couple of decades, the nature of the optimal solutions in terms of their existence and form (cyclic/partitioned/acyclic and so on) for multi-agent settings remains unanswered. Through this work, we attempt to address this gap by providing recurrent patrol strategies for various problem formulations. We establish certain guarantees on the performance of such strategies that establish a baseline in this problem class and prove the existence of solutions.

In this work, we generalize the patrol problem to encompass various problem objectives studied in the literature. We derive a key result that an $\epsilon$-optimal recurrent patrol strategy exists for the general patrol problem. The significant contributions are as follows-
\begin{itemize}
    \item We show the existence of an $\epsilon$-approximate recurrent patrol strategy for every patrol strategy. This result enables a fair comparison between different solution approaches proposed in the literature under a single class of recurrent patrol strategies.
    \item The approximation factor $\epsilon$ is proportional to the discretization constant $D$ that can be arbitrarily small.
    \item We, then, establish that an $\epsilon$-optimal recurrent patrol strategy exists for a broad class of problems studied under the non-adversarial multi-agent patrol problem. Also, this bound is an improvement over results in the existing literature.
    \item We provide a systematic approach to construct $\epsilon$-approximate recurrent patrol strategy for any patrol strategy that solves the general patrol problem.
\end{itemize}

The rest of the paper is organized as follows. Section~\ref{sec_prob} introduces the terminology used and formulates the general patrol problem. In Section~\ref{sec_rec}, we state and prove that (i) for every patrol strategy, there exists an $\epsilon$-approximate recurrent patrol strategy, and (ii) Problem~\ref{prob_gpp} defined in Section~\ref{sec_prob} has an $\epsilon$ optimal recurrent patrol strategy. We also present an algorithmic approach to convert a given strategy to an $\epsilon$-approximate recurrent strategy. Section~\ref{sec_sim} contains simulation results that validate the claims made in Section~\ref{sec_rec}. We conclude in Section~\ref{sec_con} by summarizing the essential contributions of this work and suggesting future directions for expansion.

\section{Problem Definition}\label{sec_prob}
The patrol environment is a strongly connected directed graph $\mathcal{G}(\mathcal{V}, \mathcal{E})$ where $\mathcal{V} = \{v_1, \ldots, v_{\vert \mathcal{V} \vert}\}$ is the set of nodes corresponding to the patrol locations, $\mathcal{E}$ is the set of edges representing the road segments between the locations. Each edge $(v_a, v_b) \in \mathcal{E}$ has an associated edge weight $w_{ab} \in \mathbb{R}_+$ denoting the nominal time required to traverse from node $v_a$ to node $v_b$. Each node $v_a \in \mathcal{V}$ has a weight $\phi_a \in [0, 1]$ signifying its importance. We assume a set of homogeneous patrol agents $\mathcal{A} = \{a_1, \ldots, a_{\vert \mathcal{A} \vert}\}$ that patrol the nodes by travelling along the edges.

We define patrol strategy $\pi = \{(t(i), r(i), v(i), a(i))\}_{i = 1}^\infty$ as an ordered sequence of 4-tuples. For $i = 1, 2, \dots$, agent $a(i) \in \mathcal{A}$ departs from node $v(i) \in \mathcal{V}$ at time $t(i) \in \mathbb{R}$ after staying at $v(i)$ for a duration of $r(i) \in \mathbb{R}$. Let $t^a(i) \in \mathbb{R}$ be the time of arrival at $v(i)$, then , $t(i) = t^a(i) + r(i)$. Each 4-tuple in $\pi$ corresponds to the instance when an agent leaves a node. We maintain a chronological sequence of departures in $\pi$. That is, $t(i) \geq t(i - 1)$ with $t(1) = 0$. We also assume that $r(1) = 0$. 

Consider a scenario where agent $a_k$ departs from node $v_a$ at the $m^{th}$ instance and next visits node $v_b$ at the $n^{th}$ instance after traversing along the edge $(v_a, v_b)$. We have,
\begin{align}
    t(n) &= t^a(n) + r(n) \nonumber \\
    &= t(m) + w_{ab} + r(n) \label{eq_tij}
\end{align}
where the indices $m$ and $n$ need not be consecutive.

The Instantaneous Idleness (or Idleness) value $I_k(t)$ of node $v_k$ at time $t$ is the time elapsed since an agent's most recent departure from node $v_k$. At instance $i$, as agent $a(i)$ stays at node $v(i) = v_a$ from $t^a(i)$ to $t(i)$, $I_a(t) = 0$ in the interval $[t^a(i), t(i)]$. We assume, without loss of generality, the initial Idleness value $I_k(0) = 0$ for every node $v_k \in \mathcal{V}$, under any patrol strategy $\pi$. Furthermore, we assume a constraint on the Idleness values $I_k(t)$ for a subset of the nodes $\mathcal{V}^T \subseteq \mathcal{V}$ as $I_k \leq T_k$. The ordered set of Idleness values for all nodes in $\mathcal{V}$ is referred to as the Idleness vector, denoted as $\mathcal{I}_\pi(t)$.

We propose a generalization that encapsulates several patrolling problem formulations studied in the literature. 
\begin{prob}[General Patrol Problem]\label{prob_gpp}
Given a graph $\mathcal{G}(\mathcal{V}, \mathcal{E})$, a subset $\mathcal{V}^T \subseteq \mathcal{V}$ with the set of constraints $\{T_k \mid v_k \in \mathcal{V}^T\}$, and an Idleness function $f_\pi(t, \mathcal{V})$, find a patrol strategy $\pi_\star$ that minimizes the cost
\begin{equation}
    J_\pi = \max_{t \geq 0} f_\pi(t, \mathcal{V})\label{eq_J}
\end{equation}
while satisfying $I_k(t) \leq T_k ~ \forall v_k \in \mathcal{V}^T, ~ t \geq 0$. 
\end{prob}
The Idleness function $f_\pi(t, \mathcal{V})$ can represent -
\begin{itemize}
    \item Graph Maximum Idleness~\cite{theo_ana,coop,gen_pat} \[f_\pi(t,\mathcal V) = \max_{v_k \in \mathcal{V}} I_k(t)\]
    \item Weighted Maximum Idleness~\cite{pers3,10.1007/978-3-030-66723-8_7,Pasqualetti_cooperativepatrolling,8814485} \[f_\pi(t,\mathcal V) = \max_{v_k \in \mathcal{V}} \phi_k I_k(t)\]
    \item Graph Average Idleness~\cite{emp_ana,bayes}\[f_\pi(t,\mathcal V) = \sum_{v_k \in \mathcal{V}} \frac{I_k(t)}{\vert \mathcal{V} \vert}\] 
\end{itemize}
In general, the Idleness function is as follows, 
\begin{equation}
    f_\pi(t, \mathcal{V}) = C \lVert \Phi \mathcal{I}_\pi(t) \rVert_p \label{eq_f}
\end{equation}
where $\Phi$ is a diagonal matrix consisting of node weights, $C$ is a constant, and $\lVert \cdot \rVert_p$ is the $\ell^p$-norm. The Idleness functions such as $\max_{v_k \in \mathcal{V}} I_k(t) = \lVert \mathcal{I}_\pi(t) \rVert_\infty$ corresponds to $\ell^\infty$-norm and $\sum_{v_k \in \mathcal{V}} I_k(t)/\vert \mathcal{V} \vert = C\lVert \mathcal{I}_\pi(t) \rVert_1$ corresponds to $\ell^1$-norm with some constant $C$.

We show the generalization (Problem~\ref{prob_gpp}) has a recurrent solution $\pi_\star^R$ that is $\epsilon$-optimal, that is, $J_{\pi_\star^R} \leq (1 + \epsilon(D)) J_{\pi_\star}$ with $I_k(t) \leq (1 + \epsilon(D))T_k ~ \forall v_k \in \mathcal{V}^T, ~ t \geq 0$ and some small $D \in \mathbb{R}_+$.

\section{Recurrent Patrol Strategy} \label{sec_rec}
This section shows the existence of $\epsilon$-optimal recurrent solution $\pi_\star^R$ to Problem~\ref{prob_gpp}. We arrive at the result in three steps. For any given patrol strategy $\pi$, we obtain a discrete patrol strategy $\pi^D$ and establish that $\pi^D$ is an $\epsilon$-approximate solution to $\pi$, that is, the cost of $\pi^D$, $J_{\pi^D} \leq (1 + \epsilon) J_\pi$. Then, we construct a recurrent patrol strategy $\pi^R$ from a part of $\pi^D$ and show that $\pi^R$ is also an $\epsilon$-approximate solution to $\pi$. Finally, we show that the set of recurrent strategies is finite and thereby establish the existence of an $\epsilon$-optimal recurrent solution $\pi_\star^R$.

\subsection{$\epsilon$-approximate discrete patrol strategy $\pi^D$}
Suppose there exists a patrol strategy $\pi$ that solves Problem~\ref{prob_gpp} with a cost of $J_\pi$. Consider a discretization constant $D \in \mathbb{R}_+$. We construct a new patrol strategy $\pi^{D} = \{(\tau(i), \rho(i), v(i), a(i))\}_{i = 1}^\infty$ from $\pi$. For each instance $i$ in $\pi^D$, the agent $a(i)$ stays at node $v(i)$ for $\rho(i)$ duration and departs at $\tau(i)$ where $\rho(i)$ and $\tau(i)$ satisfy,
\begin{subequations} \label{eq_step1}
\begin{gather}
    \rho(i) \geq r(i) \label{eq_r1}\\
    \tau(i) \bmod{D} = 0 \label{eq_r2}
\end{gather} 
\end{subequations}
Let $d(i)$ for $i = 1, 2, \ldots$ be the time difference in departures under $\pi$ and $\pi^D$, that is, 
\begin{equation}
d(i) = \tau(i) - t(i) \label{eq_d}    
\end{equation}
Let $\bar{d}(n)$ be the maximum over all values of $d(\cdot)$ until the index $n$. That is,
\begin{align}
    \bar{d}(n) &= \max_{i = 1, \ldots, n} d(i) \nonumber \\
    &= \max \left(d(n), \bar{d}(n - 1)\right) \label{eq_dmax}
\end{align}
Note that $d(i)$ is independent of the values of $d(\cdot)$ for $i = 1, \ldots, i - 1$. Hence, $\bar{d}(\cdot)$ need not increase at every instance.

In addition to~\eqref{eq_step1}, we impose the following two constraints on $\pi^D$ to ensure that it is $\epsilon$-approximate to $\pi$. First, the sequence of departures under $\pi^D$ is identical to that under $\pi$. Second, for each visit instance $n$, $d(n)$ is at most $D$ less than $\bar{d}(n)$. That is,
\begin{subequations} \label{eq_step2}
    \begin{align}
        \tau(n) &\geq \tau(n - 1) \label{eq_s1}\\
        d(n) &> \bar{d}(n) - D \label{eq_s2}
    \end{align}
\end{subequations}
For an agent visiting nodes $v(m)$ and then $v(n)$ under $\pi^D$,~\eqref{eq_tij} gets modified to,
% the consecutive visits by an agent under $\pi^D$ at indices $m$ and $n$ are related as follows,
\begin{align}
    \tau(n) &= \tau^a(n) + \rho(n) \nonumber \\
    &= \tau(m) + w_{ab} + \rho(n) \label{eq_tauij}
\end{align} 
where $\tau^a(n)$ is the time of arrival at instance $n$.
% Implying, $d(\dot)$ is monotonically non-decreasing for a particular agent. However, this is not true $d(i)$ for $i = 1, 2, \ldots$ as it involves multiple agents. 
% {\color{red}This extra duration is added to $d(m)$ to get $d(n)$, implying $d(n)$ is the sum of all the extra duration (under $\pi^D$ in comparison to $\pi$) that agent $a(n)$ stays at all the nodes it has visited until $n$.}

We set $\rho(1) = 0, \tau(1) = 0, d(1) = 0$ and $\bar{d}(1) = 0$. 
% The duration of stay $\rho(i)$ and the departure time $\tau(i)$ are determined iteratively for each index $i = 2, 3, \ldots$.
The values of $d(n), \bar{d}(n), \tau(n), \rho(n)$ for each instance $n = 2, 3, \ldots$ is determined iteratively as follows, 
\begin{subequations} \label{eq_update}
    \begin{align}
        d(n) &= \max \left(d^\prime, d^\prime + D \left\lfloor \frac{\bar{d}(n - 1) - d^\prime}{D} \right\rfloor \right) \label{eq_dn}\\
        \bar{d}(n) &= \max \left(d(n), \bar{d}(n - 1) \right) \label{eq_dbarn}\\
        \tau(n) &= t(n) + d(n) \label{eq_taun}\\
        \rho(n) &= d(n) - d(m) + r(n) \label{eq_rhon}
    \end{align}
\end{subequations}
where,
\begin{subequations} \label{eq_help}
\begin{align}
        d^\prime &= \tau^{\prime \prime} - t(n) \\
        \tau^{\prime \prime} &= \max (\tau(n - 1), \tau^\prime) \label{eq_taupp}\\
        \tau^\prime &= \tau(m) + D \left\lceil \frac{w_{ab} + r(n)}{D} \right\rceil \label{eq_taup}
\end{align}
\end{subequations}

The index $m$ in the above equations corresponds to the last departure of agent $a(n)$ before arriving at $v(n)$.

\begin{lem}
    The discrete patrol strategy $\pi^D$ constructed from $\pi$ iteratively using~\eqref{eq_update} satisfies~\eqref{eq_step1} and~\eqref{eq_step2}. \label{lem1}
\end{lem}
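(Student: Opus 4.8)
The plan is to verify the four inequalities \eqref{eq_r1}, \eqref{eq_r2}, \eqref{eq_s1}, and \eqref{eq_s2} by strong induction on the instance index $n$. The base case $n=1$ is immediate from the initialization $\tau(1)=0$, $d(1)=0$, $\bar d(1)=0$, $\rho(1)=0$ together with the standing assumption $r(1)=0$. For the inductive step I assume all four properties hold for every index less than $n$; in particular $\tau(i)$ is an integer multiple of $D$ and $\tau(i)\geq\tau(i-1)$ for all $i<n$ (and the previous departure $m$ of agent $a(n)$ satisfies $m<n$). The single algebraic observation that drives everything is that, combining \eqref{eq_taun} with \eqref{eq_dn} and the definition $d'=\tau''-t(n)$ from \eqref{eq_help}, the new departure time can be rewritten as
\[
    \tau(n) = t(n) + d(n) = \tau'' + D\,\max\left(0,\ \left\lfloor \frac{\bar d(n-1) - d'}{D}\right\rfloor\right),
\]
so $\tau(n)$ is $\tau''$ plus a non-negative integer multiple of $D$.

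From this identity both \eqref{eq_r2} and \eqref{eq_s1} follow quickly. For \eqref{eq_r2} it suffices to show $\tau''$ is a multiple of $D$: by \eqref{eq_taupp} $\tau''=\max(\tau(n-1),\tau')$, where $\tau(n-1)$ is a multiple of $D$ by the induction hypothesis, and $\tau'=\tau(m)+D\lceil (w_{ab}+r(n))/D\rceil$ from \eqref{eq_taup} is a multiple of $D$ because $\tau(m)$ is (as $m<n$) and the ceiling term is an integer; the maximum of two multiples of $D$ is again a multiple of $D$. For \eqref{eq_s1}, the displayed identity gives $\tau(n)\geq\tau''\geq\tau(n-1)$ directly.

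The substantive step, which I expect to be the main obstacle, is \eqref{eq_r1}, i.e.\ $\rho(n)\geq r(n)$. By \eqref{eq_rhon} this is equivalent to $d(n)\geq d(m)$. Here I would exploit the consistency between the original strategy and the update: since $d(n)\geq d'$ by \eqref{eq_dn}, and $d'=\tau''-t(n)\geq\tau'-t(n)$, and $\tau'\geq\tau(m)+w_{ab}+r(n)$ because $\lceil x\rceil\geq x$ in \eqref{eq_taup}, I obtain $d(n)\geq \tau(m)+w_{ab}+r(n)-t(n)$. Invoking \eqref{eq_tij} for the same agent, namely $t(n)=t(m)+w_{ab}+r(n)$, the right-hand side collapses to $\tau(m)-t(m)=d(m)$, yielding $d(n)\geq d(m)$ and hence \eqref{eq_r1}. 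The delicate point is precisely aligning the edge-traversal relation \eqref{eq_tij} of $\pi$ with the ceiling-rounded relation \eqref{eq_taup} of $\pi^D$ so that the $w_{ab}+r(n)$ terms cancel.

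Finally, \eqref{eq_s2} follows from the floor inequality $\lfloor x\rfloor > x-1$, splitting on \eqref{eq_dbarn}. If $\bar d(n)=d(n)$, then $d(n)=\bar d(n)>\bar d(n)-D$ since $D>0$. If instead $\bar d(n)=\bar d(n-1)>d(n)$, then $\bar d(n-1)>d'$ forces the floor in \eqref{eq_dn} to be non-negative, so $d(n)=d'+D\lfloor(\bar d(n-1)-d')/D\rfloor > d' + (\bar d(n-1)-d') - D = \bar d(n)-D$, as required. As a consistency check one may also verify that substituting \eqref{eq_rhon} into \eqref{eq_tauij} reproduces $\tau(n)=t(n)+d(n)$, confirming that $\pi^D$ is a well-defined patrol strategy with $\rho(n)\geq r(n)\geq 0$.
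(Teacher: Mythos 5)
Your proof is correct and follows essentially the same route as the paper's: induction on $n$, divisibility of $\tau^{\prime\prime}$ by $D$ for \eqref{eq_r2}, the ceiling bound in \eqref{eq_taup} combined with \eqref{eq_tij} for \eqref{eq_r1}, and the floor inequality $\lfloor x\rfloor > x-1$ for \eqref{eq_s2}. The only difference is organizational: you absorb the paper's two-case split on the max in \eqref{eq_dn} into the single identity $\tau(n) = \tau^{\prime\prime} + D\max\left(0, \left\lfloor(\bar{d}(n-1)-d^\prime)/D\right\rfloor\right)$ and the chain $d(n)\geq d^\prime\geq \tau^\prime-t(n)\geq d(m)$, which is arguably tidier but mathematically the same argument.
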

\begin{proof}
We prove this by induction. Equations \eqref{eq_step1} and~\eqref{eq_step2} are trivially satisfied for $n = 1$.
% For the first visit instance, we have $\tau(1) = t(1) = 0$ and $\rho(1) = r(1) = 0$, implying $d(1) = \bar{d}(1) = 0$. Hence,
% \begin{equation*}
%     \rho(1) = r(1), \: \tau(1) \bmod{D} = 0, \: d(1) = \bar{d}(1)
% \end{equation*}
% All the constraints are satisfied except~\eqref{eq_s1}, which is inapplicable at $1$. 
We show that if the constraints~\eqref{eq_step1} and~\eqref{eq_step2} are satisfied for $n - 1$, then they are also satisfied for $n$. 

We start by showing $\tau^\prime$ and $\tau^{\prime \prime}$ are multiples of $D$.
Since $m < n$, $\eqref{eq_r2}$ is satisfied at $m$. % Thus, the lemma is proved by induction. 
From~\eqref{eq_taup},
\begin{align*}
    \tau^\prime \bmod{D} &= \left(\tau(m) + D \left\lceil \frac{w_{ab} + r(n)}{D} \right\rceil \right) \bmod{D} \nonumber \\
    &= \left(\tau(m) \bmod{D}\right) + \left(D \left\lceil \frac{w_{ab} + r(n)}{D} \right\rceil \bmod{D}\right) \nonumber \\
    &= 0 
\end{align*}
In~\eqref{eq_taupp}, both $\tau(n- 1)$ and $\tau^\prime$ are multiples of $D$ making $\tau^{\prime\prime} \bmod{D} = 0$.
% \begin{align}
%     \tau^{\prime \prime} \bmod{D} &= 
%     \begin{cases}
%         \tau(n - 1) \bmod{D} = 0, & \text{if }\tau(n - 1) \geq \tau^\prime\\
%         \tau^\prime \bmod{D} = 0, & \text{else}
%     \end{cases} \nonumber\\
%     &= 0 \label{eq_tauppmod}
% \end{align}

From~\eqref{eq_dn}, there are two possible values for $d(n)$. 

\emph{Case A:} Consider \[d^\prime \geq d^\prime + D \left\lfloor \frac{\bar{d}(n - 1) - d^\prime}{D} \right\rfloor\] 
Implying $d(n) = d^\prime$.
From~\eqref{eq_taun} and~\eqref{eq_help},
\begin{align}
    \tau(n) &= t(n) + d(n) \nonumber\\
    &= t(n) + d^\prime \nonumber \\
    &= t(n) + \left(\tau^{\prime \prime}  - t(n)\right) \nonumber \\
    &= \tau^{\prime \prime} \label{eq_ttpp}\\
    &= \max \left(\tau(n - 1), \tau^\prime \right) \nonumber \\
    &\geq \tau^\prime \nonumber \\
    &= \tau(m) +  D \left\lceil \frac{w_{ab} + r(n)}{D} \right\rceil \nonumber
\end{align}
Substituting for $\tau(n)$ from~\eqref{eq_tauij},
\begin{align}
    \tau(m) + w_{ab} + \rho(n) &\geq \tau(m) +  D \left\lceil \frac{w_{ab} + r(n)}{D} \right\rceil \nonumber \\
    \implies \rho(n) &= D \left\lceil \frac{w_{ab} + r(n)}{D} \right\rceil - w_{ab} \nonumber \\
    &\geq \left(w_{ab} + r(n)\right) - w_{ab} \nonumber \\
    \implies \rho(n) &\geq r(n) \nonumber
\end{align}
Hence,~\eqref{eq_r1} is satisfied. Since $\tau^{\prime \prime} \bmod{D} = 0$,  from~\eqref{eq_ttpp},~\eqref{eq_r2} is satisfied. From~\eqref{eq_taupp}, $\tau^{\prime \prime} \geq \tau(n - 1)$ implying ~\eqref{eq_s1} is satisfied. On substituting $d(n) = d^\prime$ in~\eqref{eq_dn} we have,
\begin{align*}
    d(n) &\geq d(n) + D \left\lfloor \frac{\bar{d}(n - 1) - d(n)}{D} \right\rfloor \\
    \implies 0 &\geq D \left\lfloor \frac{\bar{d}(n - 1) - d(n)}{D} \right\rfloor \\
    \implies d(n) &\geq \bar{d}(n - 1) \\
    &> \max \left(d(n) - D, \bar{d}(n - 1) - D\right) \\
    &= \max \left(d(n), \bar{d}(n - 1)\right) - D \\
    &= \bar{d}(n) - D
\end{align*}
Hence,~\eqref{eq_s2} is also satisfied. So far we established that constraints~\eqref{eq_step1} and~\eqref{eq_step2} are satisfied when \[d^\prime \geq d^\prime + D \left\lfloor \frac{\bar{d}(n - 1) - d^\prime}{D} \right\rfloor\] 

\emph{Case B:} Consider \[d^\prime < d^\prime + D \left\lfloor \frac{\bar{d}(n - 1) - d^\prime}{D} \right\rfloor\] 
From~\eqref{eq_dn},
\begin{equation*}
    d(n) = d^\prime + D \left\lfloor \frac{\bar{d}(n - 1) - d^\prime}{D} \right\rfloor
\end{equation*}
The term $D \left\lfloor \frac{\bar{d}(n - 1) - d^\prime}{D} \right\rfloor$ results in agent $a(n)$ staying for a longer duration at $v(n)$ than the case when $d(n) = d^\prime$. Also, this additional duration is a multiple of $D$. As constraints~\eqref{eq_r1},~\eqref{eq_r2} and~\eqref{eq_s1} are met if $d(n) = d^\prime$, they remain satisfied on adding the term $D \left\lfloor \frac{\bar{d}(n - 1) - d^\prime}{D} \right\rfloor$. To show that~\eqref{eq_s2} holds,
\begin{align*}
    d(n) &= d^\prime + D \left\lfloor \frac{\bar{d}(n - 1) - d^\prime}{D} \right\rfloor \\
    &> d^\prime + (\bar{d}(n - 1) - d^\prime) - D\\
    &= \bar{d}(n - 1) - D\\
    \implies d(n) &> \max \left(d(n) - D, \bar{d}(n - 1) - D \right) \\
    &= \max \left(d(n), \bar{d}(n - 1)\right) - D \\
    &= \bar{d}(n) - D
\end{align*}
% Hence,~\eqref{eq_s2} is also satisfied. 
Hence, constraints~\eqref{eq_step1} and~\eqref{eq_step2} are satisfied also when \[d^\prime < d^\prime + D \left\lfloor \frac{\bar{d}(n - 1) - d(n)}{D} \right\rfloor \qed \] 
\end{proof}
% Let $m$ and $n$ be the indices corresponding to two consecutive node visits for a particular agent. 
% Since $\rho(n) \geq r(n)$, we have $d(n) \geq d(m)$ where $m$ and $n$ are the indices corresponding to two consecutive node visits for a particular agent. The agent stays at the node $v(n)$ for an extra duration of $\rho(n) - r(n)$.
Next, we establish a relationship between $\bar{d}(\cdot)$ and $t(\cdot)$ for two different instances. For that, consider instances $p, i$ and $q$ such that the following two conditions hold. First, the agent $a(i)$ departs $v(p)$ and arrives at $v(i)$ after traversing the edge $(v(p), v(i))$ with edge-weight $\omega$ (say). From~\eqref{eq_d},
\begin{align}
    d(i) &= \tau(i) - t(i) \nonumber\\
    &= \left(\tau(p) + \omega + \rho(i)\right) - \left(t(p) + \omega + r(i)\right) \nonumber \\
    &= d(p) + \rho(i) - r(i) \label{eq_assume1}
\end{align}
Secondly, since $\bar{d}(\cdot)$ need not increase at every instance, we assume that it increases at instance $i$ and then again at instance $q + 1$.  
%  Since $\bar{d}(\cdot)$ need not increase at every instance, let $q$ be an instance such that $\bar{d}(\cdot)$ increases at $(q + 1)^{th}$ instance and $\bar{d}(\cdot)$ increased previously at $i$. 
That is, 
\begin{equation}
    \bar{d}(i - 1) < \bar{d}(i) = \cdots = \bar{d}(q) < \bar{d}(q + 1) \label{eq_assume2}
\end{equation}
% Let $p$ be the index when agent $a(i)$ visited a node before arriving at $v(i)$. Let $w$ be the weight of the edge traversed. From~\eqref{eq_tij} and~\eqref{eq_tauij},

\begin{lem}
Given an instance $q$ satisfying~\eqref{eq_assume2}, there exists a unique instance $p$ such that
\begin{subequations}
\begin{align}
    \bar{d}(q) &\leq \bar{d}(p) + D \label{eq_dbar} \\
    t^a(q) - t(p) &\geq \underline{w} \label{eq_tbar}
\end{align}
\end{subequations}
where $\underline{w}$ is the minimum edge weight.
\label{lem_pq}
\end{lem}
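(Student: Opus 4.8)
The plan is to prove the two inequalities for the instance $p$ singled out in the paragraph preceding the lemma, and to read off uniqueness from the fact that $p$ is forced by $q$. Concretely, I would first use~\eqref{eq_assume2} to identify $i$ as the unique index with $\bar d(i-1)<\bar d(i)=\cdots=\bar d(q)$; in particular $d(i)=\bar d(i)=\bar d(q)$. I would then take $p$ to be the last departure of agent $a(i)$ before its arrival at $v(i)$, so that the edge traversal $(v(p),v(i))$ of weight $\omega$ and the relation~\eqref{eq_assume1} are available. Since the chain $q\mapsto i\mapsto p$ is determined at each step, this $p$ is unique, which settles the uniqueness part.

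To prove~\eqref{eq_dbar}, the crucial observation is that $\bar d$ can increase at $i$ only through Case~A of~\eqref{eq_dn} with $\tau(i)=\tau'$. Indeed, in Case~B one has $d(i)=d'+D\lfloor(\bar d(i-1)-d')/D\rfloor\le\bar d(i-1)$, contradicting $d(i)>\bar d(i-1)$; and within Case~A the alternative $\tau''=\tau(i-1)$ would give $\tau(i)=\tau(i-1)$ and hence $d(i)\le d(i-1)\le\bar d(i-1)$, again impossible. Thus $\tau(i)=\tau''=\tau'=\tau(p)+D\lceil(\omega+r(i))/D\rceil$. Substituting $t(i)=t(p)+\omega+r(i)$ and $\tau(p)=t(p)+d(p)$ into $d(i)=\tau(i)-t(i)$ gives $d(i)=d(p)+\big(D\lceil(\omega+r(i))/D\rceil-(\omega+r(i))\big)$, and the bracketed quantity lies in $[0,D)$. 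Hence $d(i)<d(p)+D$, and since $\bar d(q)=d(i)$ and $\bar d(p)\ge d(p)$, we obtain $\bar d(q)<\bar d(p)+D$, which is~\eqref{eq_dbar}.

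For~\eqref{eq_tbar}, I would begin with $t^a(i)=t(i)-r(i)=t(p)+\omega\ge t(p)+\underline w$, using $\omega\ge\underline w$ and $r(i)\ge0$. If $i=q$ this is exactly the claim, since then $t^a(q)-t(p)=\omega$. The case $i<q$ is the heart of the argument: it amounts to showing that the arrival ending the plateau is no earlier than the one starting it, $t^a(q)\ge t^a(i)$. Writing $t^a(q)=t(m_q)+w$ with $w\ge\underline w$ for the last departure $m_q$ of $a(q)$, this reduces to placing $m_q$ no earlier than $p$. The genuinely delicate situation, which I expect to be the main obstacle, is $a(q)\ne a(i)$ with agent $a(q)$ dwelling across $[t^a(i),t(q)]$, where a naive bound allows the arrival at $v(q)$ to fall back toward $t(p)$; I would rule this out using the plateau condition $d(j)\le\bar d(q)$ for $i\le j\le q$ together with the chronological ordering of departures and the single-edge relation~\eqref{eq_assume1}, which jointly constrain how far an arrival at $v(q)$ can precede $t(i)$. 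Pinning down this dwelling case is the step I anticipate will require the most care.
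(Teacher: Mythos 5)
Your choice of $p$ and your proof of~\eqref{eq_dbar} are correct and in substance the same as the paper's. The paper reaches $\bar{d}(q)=d(i)\le d'=\tau''-t(i)$ and then bounds $\tau''$ via $\max\left(\tau(i-1),\,D\lceil (t(i)+d(p))/D\rceil\right)$ to obtain $d(p)+D$; your case elimination (Case~B and the branch $\tau''=\tau(i-1)$ both force $d(i)\le\bar{d}(i-1)$, contradicting~\eqref{eq_assume2}, hence $\tau(i)=\tau'$) carries out the same computation more transparently and even yields the strict bound $d(i)<d(p)+D$. The uniqueness reading (that $p$ is determined by $q$ through $i$) is also consistent with how the paper sets up the lemma.

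The gap is in~\eqref{eq_tbar}. For $q=i$ you are done, since $t^a(i)-t(p)=\omega\ge\underline{w}$. But for $i<q$ you only announce an intention to prove $t^a(q)\ge t^a(i)$ from the plateau condition and the chronological ordering, and you explicitly defer the case in which an agent $a(q)\ne a(i)$ arrives at $v(q)$ before $t(i)$ and dwells there across the plateau; no argument is actually supplied, so the second inequality is not established. You have correctly located the delicate point: the chronological ordering gives $t(q)\ge t(i)$ but says nothing directly about the arrival time $t^a(q)=t(q)-r(q)$ when $r(q)$ is large, and the plateau condition constrains the $d(\cdot)$ values, not arrival times. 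For comparison, the paper closes this step in one line by asserting $t^a(q)\ge t(i)$ and combining it with $t(i)-t(p)\ge t^a(i)-t(p)\ge\underline{w}$; it offers no justification for that assertion either. So your proposal is honest about, but does not resolve, the one step where the paper's own proof is weakest, and as written your proof of~\eqref{eq_tbar} is incomplete.
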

\begin{proof}
From~\eqref{eq_dmax} and~\eqref{eq_assume2},
\[\bar{d}(q) = \bar{d}(i) = \max \left(d(i), \bar{d}(i - 1) \right) > \bar{d}(i - 1)\] 
Implying, $\bar{d}(q) = d(i)$.
From~\eqref{eq_update} and~\eqref{eq_help},
\begin{align*}
    \bar{d}(q) &= d(i) \\
    &= \max \left( d^\prime, d^\prime + D \left\lfloor \frac{\bar{d}(i - 1) - d^\prime}{D} \right\rfloor \right) \\
    &\leq \max (d^\prime, d^\prime + (\bar{d}(i - 1) - d^\prime)) \\
    &= \max (d^\prime, \bar{d}(i - 1)) \\
    &= d^\prime \\
    &= \tau^{\prime \prime} - t(i) \\
    &= \max \left(\tau(i - 1), \tau^\prime\right) - t(i) \\
    &= \max \left(\tau(i - 1), \tau(p) + D \left\lceil \frac{\omega + r(i)}{D} \right\rceil\right) -t(i) \\   
    &= \max \left(\tau(i - 1), D \left\lceil \frac{t(p) + d(p) + \omega + r(i)}{D} \right\rceil\right) -t(i) \\
    &= \max \left(\tau(i - 1), D \left\lceil \frac{t(i) + d(p)}{D} \right\rceil\right) -t(i) \\
    &\leq \max \left(\tau(i - 1) - t(i - 1), D \left\lceil \frac{t(i) + d(p)}{D} \right\rceil - t(i)\right) \\
    &\leq \max (\bar{d}(i - 1), (t(i) + d(p) + D) - t(i)) \\
    &= d(p) + D \\
    &\leq \bar{d}(p) +  D 
\end{align*}
Hence,~\eqref{eq_dbar} is proved.
Under $\pi$, agent $a(i)$ departs $v(p)$ at $t(p)$ and arrives $v(i)$ at $t^a(i)$. We have, \((t(i) - t(p)) \geq (t^a(i) - t(p)) \geq \underline{w}\). Substituting $t^a(q) \geq t(i)$, \[t^a(q) - t(p) \geq \underline{w} \qed \] 
\end{proof}
Next, we relate the Idleness values of nodes under $\pi$ and $\pi^D$. 
Under $\pi^D$, consider a time $\tau$ and a node $v_h \in \mathcal{V}$. Let $\tau(m_h) \leq \tau$ be the time of the most recent departure from $v_h$. Let $\tau^a(n) \geq \tau$ be the first instance of arrival to any node after $\tau$. The corresponding times under $\pi$ are $t(m_h)$ and $t^a(n)$.

% By definition, at index $i$, the agent $a(i)$ under $\pi^D$ stays at $v(i)$ for an additional duration of $\rho(i) - r(i)$ in comparison to $\pi$.
% By definition,
% \begin{align*}
%     \tau(m_h) &= t(m_h) + d(m_h) \\
%     \tau(n) &= t(n) + d(n)
% \end{align*}
% Also,
% \begin{align*}
%     \tau(n) &= \tau^a(n) + \rho(n) \\ 
%     t(n) &= t^a(n) + r(n) 
% \end{align*}
% On substitution,
% \begin{align*}
%     \tau^a(n) + \rho(n) &= t^a(n) + r(n) + d(n) \\
%     \implies \tau^a(n) &= t^a(n) + d(n) - (\rho(n) - r(n))
% \end{align*}
% Substituting~\eqref{eq_r1},
% \begin{align*}
%     \tau^a(n) &\leq t^a(n) + d(n)
% \end{align*}
\begin{lem}
The Idleness value $I_h^D(\tau)$ of node $v_h$ at time $\tau$ under $\pi^D$ is related to the Idleness value $I_h(t^a(n))$ at time $t^a(n)$ under $\pi$ as
\begin{equation}
    I_h^D(\tau) \leq \left(1 + \epsilon(D)\right)I_h(t^a(n)) \label{eq_ID}
\end{equation}
where $\epsilon(D)$ is some linear function of the discretization constant $D$. \label{lem3}
\end{lem}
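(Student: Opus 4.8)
The plan is to measure the idleness $I_h^D(\tau)=\tau-\tau(m_h)$ directly against its counterpart under $\pi$, so that the discrepancy is exactly the discretization drift accumulated between the two relevant instances, which Lemma~\ref{lem_pq} is built to bound. Since $\tau\le\tau^a(n)$, I would first replace $\tau$ by $\tau^a(n)$ to get $I_h^D(\tau)\le\tau^a(n)-\tau(m_h)$, and then convert the two $\pi^D$ times into $\pi$ times plus their drifts using \eqref{eq_d}, \eqref{eq_taun}, \eqref{eq_rhon}, and \eqref{eq_tauij}. Writing $m$ for the last departure of agent $a(n)$ before instance $n$ (as in \eqref{eq_rhon}) and using $t^a(n)=t(n)-r(n)$ gives $\tau^a(n)-t^a(n)=d(n)-(\rho(n)-r(n))=d(m)$, while $\tau(m_h)=t(m_h)+d(m_h)$. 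Hence
\[
I_h^D(\tau)\ \le\ \bigl(t^a(n)-t(m_h)\bigr)+\bigl(d(m)-d(m_h)\bigr).
\]

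I would then identify the first bracket with the $\pi$-idleness. Because the departure order is preserved by \eqref{eq_s1} and $n$ is the first arrival after $\tau$, the instance $m_h$ is also the most recent departure from $v_h$ before $t^a(n)$ under $\pi$ (in the degenerate case where $v_h$ is occupied at $\tau$ one has $I_h^D(\tau)=0$ and the claim is trivial), so $t^a(n)-t(m_h)=I_h(t^a(n))$. The statement therefore reduces to the single drift estimate
\[
d(m)-d(m_h)\ \le\ \epsilon(D)\,I_h(t^a(n))=\epsilon(D)\bigl(t^a(n)-t(m_h)\bigr),
\]
which I expect to hold with $\epsilon(D)=D/\underline{w}$, a quantity proportional to $D$ and depending only on the smallest edge weight, as the statement requires.

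For this last estimate I would bound $d(m)\le\bar d(n)$ via \eqref{eq_dmax} and then telescope Lemma~\ref{lem_pq}. Starting from the last instance at or before $n$ at which $\bar d$ increases, repeatedly applying the lemma produces a descending chain of instances along which $\bar d$ drops by at most $D$ while the $\pi$-time drops by at least $\underline{w}$ at each step; continuing until the time coordinate falls to $t(m_h)$ yields at most $(t^a(n)-t(m_h))/\underline{w}$ steps, and hence a total increase of $\bar d$ over this span of at most $\tfrac{D}{\underline{w}}(t^a(n)-t(m_h))$. This is exactly $\epsilon(D)\,I_h(t^a(n))$, closing the bound.

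The main obstacle is this telescoping step. Lemma~\ref{lem_pq} controls the envelope $\bar d$ between a plateau-end $q$ and an earlier index $p$, so making the iteration rigorous requires re-selecting, at each stage, the plateau-end to which the lemma applies and verifying that the per-step decrements of $D$ and $\underline{w}$ compose correctly down to the reference instance $m_h$. The delicate bookkeeping is the mismatch between the envelope $\bar d$ and the pointwise drift $d$ at the two endpoints: one must ensure the chain terminates with an envelope value no larger than $d(m_h)$ rather than merely $\bar d(m_h)$, and that the arrival/departure indices align so that $t^a(n)-t(m_h)$ genuinely equals $I_h(t^a(n))$. Once the chain is arranged so that each removed increment of $D$ is charged to a distinct interval of $\pi$-time of length at least $\underline{w}$ elapsed before $t^a(n)$, the linear-in-$D$ bound follows and the lemma is proved.
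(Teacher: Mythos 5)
Your decomposition and your use of Lemma~\ref{lem_pq} are essentially the paper's own proof: after dismissing the trivial case, the paper likewise writes $I_h^D(\tau) \le \bigl(t^a(n)-t(m_h)\bigr) + \Delta I_h$ with $\Delta I_h$ the accumulated drift, bounds $\Delta I_h \le \bar{d}(n) - d(m_h)$, converts $d(m_h)$ to $\bar{d}(m_h)$ at a cost of $D$ via \eqref{eq_s2}, and telescopes Lemma~\ref{lem_pq} down a chain $p_1,\dots,p_L$ between $m_h$ and $n$ so that each increment of $D$ in the envelope $\bar{d}$ is charged to a distinct interval of $\pi$-time of length at least $\underline{w}$. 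Your refinement of replacing $d(n)$ by $d(m)$ in the drift term is harmless but buys nothing, since you immediately relax to $\bar{d}(n)$ anyway.

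The one concrete place the write-up falls short is the constant. The ``delicate bookkeeping'' you flag at the end is real and does not resolve to $\epsilon(D)=D/\underline{w}$: the chain's last link gives only $\bar{d}(p_L)\le \bar{d}(m_h)+D$ together with $t(p_L)\ge t(m_h)$, i.e.\ a $+D$ with no accompanying $\underline{w}$ of elapsed time, and the conversion from $\bar{d}(m_h)$ back to $d(m_h)$ costs another $+D$. These two additive terms cannot be charged to intervals of length $\underline{w}$, so the telescoping yields $\Delta I_h \le (L+2)D$ with $L \le I_h(t^a(n))/\underline{w}$, hence $I_h^D(\tau) \le \bigl[1 + \bigl(1/\underline{w} + 2/I_h(t^a(n))\bigr)D\bigr] I_h(t^a(n))$. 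To make the bracket a function of $D$ alone, the paper introduces $\underline{I}(\pi)$, the minimum nonzero idleness over all arrival instances, and sets $\epsilon(D) = \bigl(1/\underline{w} + 2/\underline{I}(\pi)\bigr)D$. This is still linear in $D$, so the lemma as stated survives, but your claimed $D/\underline{w}$ is not what the argument delivers, and without some analogue of $\underline{I}(\pi)$ the two leftover multiples of $D$ have nowhere to go.
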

\begin{proof}
% Under $\pi^D$, no agent arrives at $v_h$ in the interval $\left[\tau, \tau(n)\right)$. Suppose an agent arrives at $v_h$ in the interval $\left[\tau(m_h), \tau \right)$, $I^D_h(\tau) = 0$. 
If an agent visits $v_h$ in $\left[\tau(m_h), \tau^a(n)\right)$ under $\pi^D$, $I_h(\tau) = 0$ and therefore~\eqref{eq_ID} holds trivially. So, we prove the lemma when no agent visits $v_h$ in $\left[\tau(m_h), \tau^a(n)\right)$. The Idleness of $v_h$ at $\tau$ is, 
\begin{align}
    I_h^D(\tau) &= \tau - \tau(m_h) \nonumber\\
    &\leq \tau^a(n) - \tau(m_h) \nonumber\\
    &\leq (t^a(n) + d(n)) - (t(m_h) + d(m_h)) \nonumber \\
    &= (d(n) - d(m_h)) + (t^a(n) - t(m_h)) \nonumber \\
    \implies I_h^D(\tau) &\leq \Delta I_h + I_h(t^a(n)) \label{eq_fin}
\end{align}
where,\( \Delta I_h = d(n) - d(m_h) \leq \bar{d}(n) - d(m_h) \).
Since $\pi^D$ satisfies Lemma~\ref{lem1}, we substitute~\eqref{eq_s2} for $d(m_h)$ to obtain,
\begin{align}
   \Delta I_h &\leq \bar{d}(n) - \bar{d}(m_h) + D \label{eq_DI}
\end{align}
From Lemma~\ref{lem_pq}, there exists a unique instance $p_1$ for instance $n$ such that~\eqref{eq_dbar} and~\eqref{eq_tbar} hold. Similarly, a unique instance $p_2$ for instance $p_1$ and so on.
Suppose we have $L$ such instances $p_1, \ldots, p_L$ between instances $m_h$ and $n$,
\begin{align*}
    &\bar{d}(n) \leq \bar{d}(p_1) + D &\text{ \&}& \quad t^a(n) - t(p_1) \geq \underline{w} \\
    &\bar{d}(p_1) \leq \bar{d}(p_{2}) + D &\text{ \&}& \quad t(p_{1}) - t(p_{2}) \geq \underline{w} \\
    & \qquad \vdots & & \qquad \vdots \\
    &\bar{d}(p_L) \leq \bar{d}(m_h) + D &\text{ \&}& \quad t(p_L) \geq t(m_h)
\end{align*}
Adding the above $(L + 1)$ inequalities,
\begin{align}
    &\bar{d}(n) \leq \bar{d}(m_h) + LD + D &\text{ \&}& \quad t^a(n) \geq t(m_h) + L \underline{w} \label{eq_sum} 
\end{align}
Substituting the first inequality in~\eqref{eq_sum},
\begin{align*}
    \Delta I_h &\leq \bar{d}(m_h) + (L + 1)D - \bar{d}(m_h) + D =(L + 2)D 
\end{align*}
From the second inequality,
\begin{equation*}
    L \leq \frac{t^a(n) - t(m_h)}{\underline{w}} = \frac{I_h(t^a(n))}{\underline{w}}
\end{equation*}
Substituting for $L$ in the previous inequality,
\begin{equation*}
    \Delta I_h \leq \left( \frac{I_h(t^a(n))}{\underline{w}} + 2 \right) D
\end{equation*}
Substituting in~\eqref{eq_fin},
\begin{equation}
    I_h^D(\tau) \leq \left[1 + \left(\frac{1}{\underline{w}} + \frac{2}{I_h(t^a(n))}\right) D \right]I_h(t^a(n)) \label{I^D} 
\end{equation}  
Let $\underline{I}(\pi)$ denote the minimum non-zero Idleness value of any node over all arrival instances $t^a(i), i = 1, 2, \ldots$, that is, \[\underline{I}(\pi) = \min {\{I_j(t^a(i)) \vert I_j(t^a(i)) \neq 0, v_j \in \mathcal{V}, i = 1, 2, \ldots \}}\]
Substituting in~\eqref{I^D},
\begin{equation*}
    I_h^D(\tau) \leq \left(1 + \epsilon(D)\right)I_h(t^a(n))
\end{equation*}
where \[\epsilon(D) = \left(\frac{1}{\underline{w}} + \frac{2}{\underline{I}(\pi)}\right) D \qed\] 
% \eqref{eq_ID} is holds even if the Idleness value $I^D_h(\tau) = 0$. 
\end{proof}
Now, we present the main result of this subsection.
\begin{thm}
The discrete patrol strategy $\pi^D$ constructed iteratively as per~\eqref{eq_update} is an $\epsilon$-approximate strategy to the given patrol strategy $\pi$, that is, 
\begin{equation}
J_{\pi^D} \leq \left(1 + \epsilon(D) \right) J_\pi \label{J_piD}
\end{equation}\label{thm_dps}
\end{thm}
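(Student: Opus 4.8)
The plan is to lift the pointwise, per-node Idleness bound of Lemma~\ref{lem3} to a bound on the cost functional $J$, exploiting that $f_\pi$ in~\eqref{eq_f} is a monotone norm of the Idleness vector. First I would fix an arbitrary time $\tau \geq 0$ and examine the Idleness vector $\mathcal{I}_{\pi^D}(\tau)$ under the discrete strategy. By the construction preceding Lemma~\ref{lem3}, the instance $n$ — the first arrival to \emph{any} node after $\tau$ — depends only on $\tau$ and not on the particular node $v_h$. Hence a \emph{single} arrival time $t^a(n)$ under $\pi$ serves as the reference for every node simultaneously, and Lemma~\ref{lem3} gives $I_h^D(\tau) \leq (1 + \epsilon(D)) I_h(t^a(n))$ for every $v_h \in \mathcal{V}$, with the same uniform $\epsilon(D)$ and trivially so for nodes visited inside $[\tau(m_h), \tau^a(n))$, where $I_h^D(\tau) = 0$. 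In vector form this is the componentwise domination $\mathcal{I}_{\pi^D}(\tau) \leq (1 + \epsilon(D))\,\mathcal{I}_\pi(t^a(n))$ between two nonnegative vectors.

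Next I would invoke monotonicity of the cost. Since $\Phi$ is diagonal with nonnegative entries $\phi_k \in [0,1]$, multiplication by $\Phi$ preserves the componentwise inequality, giving $\Phi\,\mathcal{I}_{\pi^D}(\tau) \leq (1 + \epsilon(D))\,\Phi\,\mathcal{I}_\pi(t^a(n))$. As the $\ell^p$-norm is monotone on the nonnegative orthant, applying $C\lVert \cdot \rVert_p$ and using the definition~\eqref{eq_f} yields
\begin{equation*}
    f_{\pi^D}(\tau, \mathcal{V}) \leq (1 + \epsilon(D))\, f_\pi(t^a(n), \mathcal{V}).
\end{equation*}
Because $f_\pi(t^a(n), \mathcal{V}) \leq \max_{t \geq 0} f_\pi(t, \mathcal{V}) = J_\pi$ by the definition~\eqref{eq_J} of the cost, we get $f_{\pi^D}(\tau, \mathcal{V}) \leq (1 + \epsilon(D)) J_\pi$ for the fixed $\tau$.

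Finally, since $\tau \geq 0$ was arbitrary, taking the supremum over all $\tau$ gives $J_{\pi^D} = \max_{\tau \geq 0} f_{\pi^D}(\tau, \mathcal{V}) \leq (1 + \epsilon(D)) J_\pi$, which is~\eqref{J_piD}. I expect the only genuinely delicate point to be justifying that a single reference instance $t^a(n)$ applies to all nodes at once: this is what lets the pointwise estimate of Lemma~\ref{lem3} assemble into a bound on the entire Idleness vector, and it hinges on $n$ being the first arrival to \emph{any} node rather than to $v_h$ specifically. The monotonicity of $C\lVert \Phi\,\cdot\,\rVert_p$ on nonnegative vectors is routine but worth stating explicitly, since it is precisely the mechanism converting componentwise Idleness domination into the cost inequality.
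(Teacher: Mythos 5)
Your proposal is correct and follows essentially the same route as the paper's proof: apply the per-node bound of Lemma~\ref{lem3} at a fixed $\tau$ with the common reference instance $t^a(n)$, pass to the cost via the norm, bound by $J_\pi$, and take the supremum over $\tau$. You are in fact more explicit than the paper about the two points it glosses over --- that a single $t^a(n)$ serves all nodes simultaneously, and that monotonicity of $C\lVert \Phi\,\cdot\,\rVert_p$ on the nonnegative orthant is what converts the componentwise domination into the norm inequality.
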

\begin{proof}
From~\eqref{eq_J} and~\eqref{eq_f},
\begin{align*}
    J_{\pi^D} = \max_{\tau \geq 0} \lVert \Phi \mathcal{I}_{\pi^D}(\tau) \rVert_p
\end{align*}
From~\eqref{eq_ID}, for any $\tau \geq 0$,
\begin{align*}
    \lVert \Phi \mathcal{I}_{\pi^D}(\tau) \rVert_p &\leq (1 + \epsilon(D)) \lVert \Phi \mathcal{I}_{\pi}(t^a(n)) \rVert_p \\
    &\leq \left(1 + \epsilon(D)\right) \max_{i = 1, 2, \ldots} \lVert \Phi \mathcal{I}_{\pi}(t^a(n)) \rVert_p
\end{align*}
As the above inequality holds for each $\tau \geq 0$,
\begin{align*}
    J_{\pi^D} &\leq \left(1 + \epsilon(D) \right) \max_{i = 1, 2, \ldots} \lVert \Phi \mathcal{I}_\pi(t^a(i)) \rVert_p \\
    &\leq \left(1 + \epsilon(D) \right) \max_{t \geq 0} \lVert \Phi \mathcal{I}_\pi(t) \rVert_p \\
    &= \left(1 + \epsilon(D) \right) J_\pi \qed
\end{align*}
\end{proof}
\begin{cor}
Under $\pi^D$, for all nodes $v_k \in \mathcal{V}^T$ and for all $\tau \geq 0 $, 
\begin{equation}
    I_k^D(\tau) \leq \left(1 + \epsilon(D) \right) T_k \label{T_piD}
\end{equation}\label{cor_dps} 
\end{cor}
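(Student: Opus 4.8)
The plan is to derive this corollary as an almost immediate consequence of Lemma~\ref{lem3} together with the constraint-feasibility of the original strategy $\pi$. Lemma~\ref{lem3} already furnishes the key pointwise bound $I_h^D(\tau) \leq (1 + \epsilon(D))\,I_h(t^a(n))$ for an arbitrary node $v_h$ and arbitrary time $\tau \geq 0$, where $t^a(n)$ is the first arrival time under $\pi$ associated with $\tau$. So the work reduces to specializing this bound to the constrained nodes and then invoking the hypothesis on $\pi$.

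First I would fix an arbitrary node $v_k \in \mathcal{V}^T$ and an arbitrary time $\tau \geq 0$, and apply Lemma~\ref{lem3} with $v_h = v_k$, obtaining
\begin{equation*}
    I_k^D(\tau) \leq \left(1 + \epsilon(D)\right) I_k(t^a(n)).
\end{equation*}
Next I would recall that, by hypothesis, $\pi$ is a feasible strategy for Problem~\ref{prob_gpp}, so it satisfies $I_k(t) \leq T_k$ for every $v_k \in \mathcal{V}^T$ and every $t \geq 0$. Since $t^a(n)$ is an arrival time under $\pi$ and hence nonnegative, this feasibility applies at $t = t^a(n)$, giving $I_k(t^a(n)) \leq T_k$. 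Substituting this into the previous inequality yields $I_k^D(\tau) \leq (1 + \epsilon(D))\,T_k$, and since $v_k$ and $\tau$ were arbitrary, the claim~\eqref{T_piD} follows for all $v_k \in \mathcal{V}^T$ and all $\tau \geq 0$.

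There is no genuine obstacle here beyond bookkeeping: the only point requiring a moment's care is that $\epsilon(D)$ as produced by Lemma~\ref{lem3} is a single quantity independent of the particular node and time (it depends only on $\underline{w}$ and $\underline{I}(\pi)$), so the same multiplicative factor $(1 + \epsilon(D))$ can be used uniformly across all constrained nodes. One should also confirm that the arrival instance $t^a(n)$ exists and is nonnegative for every $\tau \geq 0$, which is guaranteed by the construction of $\pi^D$ and the fact that departures are chronologically ordered with $t(1) = 0$. With these observations the corollary is complete.
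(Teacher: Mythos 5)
Your proposal is correct and follows essentially the same route as the paper's own proof: apply the pointwise bound of Lemma~\ref{lem3} at the constrained node and then invoke the feasibility of $\pi$ (i.e., $I_k(t^a(n)) \leq T_k$) to conclude. The only cosmetic difference is that the paper phrases the argument via maxima over $\tau$ and over arrival instances rather than pointwise, which changes nothing of substance.
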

\begin{proof}
For each $\tau$, there exists a $t^a(n)$ such that~\eqref{eq_ID} holds, implying for each $v_k \in \mathcal{V}^T$,
\begin{align*}
    \max_{\tau \geq 0} I_k^D(\tau) &\leq (1 + \epsilon(D)) \max_{i = 1, 2, \ldots} I_k(t^a(i))\\
    &\leq (1 + \epsilon(D)) T_k     
\end{align*} 
The second inequality holds since $\pi$ satisfies Problem~\ref{prob_gpp}.\qed
\end{proof}
Theorem~\ref{thm_dps} and Corollary~\ref{cor_dps} proves that the Problem~\ref{prob_gpp} has a discrete solution with a $(1 + \epsilon(D))$-approximation factor.
\begin{figure}
    \centering
    \fbox{\includegraphics[width = \textwidth]{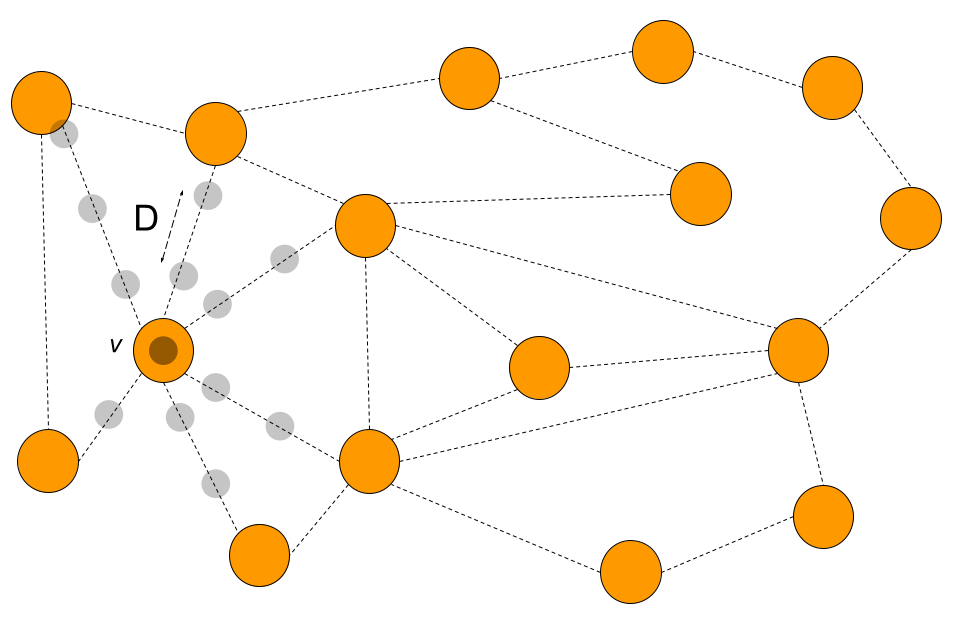}}
    \caption{A graphical representation of feasible positions for agent $a_k$ after departing from node $v$ under patrol strategy $\pi^D$.}
    \label{agent_pos}
\end{figure}
\subsection{$\epsilon$-approximate recurrent patrol strategy $\pi^R$}
Next, we construct a recurrent patrol strategy $\pi^R$ from $\pi^D$. Under $\pi^R$, each agent repeats its walk after a specific time interval. At any time $t$, let $\mathcal{X}_{\pi}(t)$ represent the set of positions of all the agents at $t$. 
\begin{lem}
    The Idleness vector $\mathcal{I}_{\pi^D}(\tau(i))$ and the set of agent positions $\mathcal{X}_{\pi^D}(\tau(i))$ belong to finite sets for all $i = 1, 2, \ldots$.\label{thm_2} 
\end{lem}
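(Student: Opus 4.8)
The plan is to show that each of the two objects can take only finitely many values by combining two facts: every time quantity relevant to $\pi^D$ is an integer multiple of the discretization constant $D$, and these quantities are bounded above. Finiteness then follows because only finitely many multiples of $D$ fit inside any bounded interval.

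First I would handle the agent positions $\mathcal{X}_{\pi^D}(\tau(i))$. Since every departure time satisfies $\tau(i) \bmod D = 0$ by \eqref{eq_r2}, at the departure instant $\tau(i)$ each agent is either resting at a node or traversing a single edge $(v_a, v_b)$ that it entered at an earlier departure time, itself a multiple of $D$. In the transit case the elapsed traversal time equals $\tau(i)$ minus a multiple of $D$, hence is itself a non-negative multiple of $D$ lying in the interval $[0, w_{ab}]$; there are at most $\lfloor w_{ab}/D \rfloor + 1$ such values per edge. As $\mathcal{V}$ and $\mathcal{E}$ are finite, each agent occupies one of finitely many discrete configurations, and since $|\mathcal{A}|$ is finite the joint configuration $\mathcal{X}_{\pi^D}(\tau(i))$ ranges over a finite set. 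This is precisely the discretization of feasible positions depicted in Figure~\ref{agent_pos}.

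Next I would treat the Idleness vector. For a node $v_h$ not revisited since its last departure $\tau(m_h) \leq \tau(i)$, its Idleness is $I_h^D(\tau(i)) = \tau(i) - \tau(m_h)$, a difference of two multiples of $D$ and therefore again a non-negative multiple of $D$. To conclude finiteness I need a uniform upper bound on these values. For every constrained node $v_k \in \mathcal{V}^T$, Corollary~\ref{cor_dps} already supplies the bound $I_k^D(\tau) \leq (1 + \epsilon(D))T_k$. For the remaining nodes I would invoke the finiteness of the cost: since $\pi$ is feasible with finite $J_\pi$, Theorem~\ref{thm_dps} gives $J_{\pi^D} \leq (1 + \epsilon(D))J_\pi < \infty$, and because $f_{\pi^D} = C \lVert \Phi \mathcal{I}_{\pi^D} \rVert_p$ is built from non-negative entries, a finite cost forces each weighted Idleness $\phi_k I_k^D(\tau)$ to stay bounded. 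Each bounded, $D$-valued Idleness component then takes only finitely many values, so $\mathcal{I}_{\pi^D}(\tau(i))$ belongs to a finite set.

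I expect the main obstacle to be establishing the uniform upper bound on the Idleness of \emph{every} node simultaneously, not merely the cost-contributing ones. A node with zero weight $\phi_k = 0$ that lies outside $\mathcal{V}^T$ influences neither the objective $J_\pi$ nor any constraint $T_k$, so its inter-visit times are not directly controlled by either. I would resolve this either by assuming, as is standard for a genuine patrol strategy, that every node is visited infinitely often with bounded Idleness under $\pi$, which then transfers to $\pi^D$ through the inequality of Lemma~\ref{lem3}, or by restricting the tracked state to exactly those nodes that affect the objective and the constraints. Once boundedness is secured for all tracked components, the finiteness of both $\mathcal{I}_{\pi^D}(\tau(i))$ and $\mathcal{X}_{\pi^D}(\tau(i))$ is immediate.
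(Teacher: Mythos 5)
Your proof is correct and follows essentially the same route as the paper's: departure times are multiples of $D$, so both the Idleness values and the on-edge positions are quantized in units of $D$ (positions via the common constant speed), and boundedness then yields finiteness. The paper settles boundedness of the Idleness vector in one line by citing~\eqref{eq_ID}, implicitly taking the Idleness under $\pi$ to be bounded; your explicit handling of that point, including the caveat about zero-weight nodes outside $\mathcal{V}^T$, is more careful than the original but does not change the argument.
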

\begin{proof}
Since $\tau(i), i = 1, 2, \ldots$ is a multiple of $D$, the Idleness value $I^D_k(\tau(i))$ for any node $v_k$ is also a multiple of $D$. From~\eqref{eq_ID}, $I^D_k(\tau(i))$ is bounded, implying the Idleness vector $\mathcal{I}_{\pi^D}(\tau(i))$ is also bounded. Hence, $\mathcal{I}_{\pi^D}(\tau(i))$ belongs to a finite set.

Since we assume the speed of the agents to be the same and constant, each agent can reach only a finite set of points on any edge at every $\tau(i), i = 1, 2, \ldots$. Therefore, the set of feasible locations for an agent at time $\tau(i)$ is finite (Figure~\ref{agent_pos}). Thus, the collection of all agent positions $\mathcal{X}_{\pi^D}(\tau(i))$ also belongs to a finite set. \qed
\end{proof}

Since the Idleness vectors and set of agent positions are from finite sets, we can find at least two instances $\tau(p), \tau(q)$ with $p < q$ such that 
\begin{enumerate}
    \item[C1]$\mathcal{I}_{\pi^D}(\tau(p)) = \mathcal{I}_{\pi^D}(\tau(q))$
    \item[C2]$\mathcal{X}_{\pi^D}(\tau(p)) = \mathcal{X}_{\pi^D}(\tau(q))$
\end{enumerate}

Let $\pi^D_{pq}$ be the portion of patrol strategy $\pi^D$ from $\tau(p)$ to $\tau(q)$. As the Idleness value of each node is identical at $\tau(p)$ and $\tau(q)$, it implies all nodes are visited at least once in the interval. This portion is used to construct the recurrent patrol strategy $\pi^R$.

As $\mathcal{X}_{\pi^D}(\tau(p)) = \mathcal{X}_{\pi^D}(\tau(q))$, every agent $a_g$ has a corresponding agent $a_h$ such that the position of $a_g$ at $\tau(p)$ is identical to that of $a_h$ at $\tau(q)$. Let $\chi: \mathcal{A} \rightarrow \mathcal{A}$ be the bijective map defined as $a_h = \chi(a_g)$. Let $\chi^m(a_k) = \chi(\ldots(\chi(a_k)))$ signify $\chi$ acted on an agent $a_k$ iteratively $m$ times. 

We construct an infinite patrol strategy $\pi^R$ from $\pi^D_{pq}$ where each agent $a_k$ after traversing its path along $\pi^D_{pq}$ continues with the path of agent $\chi(a_k)$. The patrol strategy $\pi^R = \{(\theta(i), \sigma(i), \nu(i), \alpha(i)\}_{i = 1}^\infty$ is expressed in terms of $\pi^D_{pq} = \{(\tau(i), \rho(i), v(i), a(i)\}_{i = p}^q$ where,
\begin{subequations}\label{eq_rec}
\begin{align}
    &\begin{multlined} \theta(i) = (\tau(i \bmod{(q - p)} + p) - \tau(p))\\ + \lfloor i / (q - p) \rfloor (\tau(q) - \tau(p)) \end{multlined}\\
    &\sigma(i) = \rho(i \bmod{(q - p)} + p) \\
    &\nu(i) = v(i \bmod{(q - p)} + p)\\
    &\alpha(i) = \chi^{\lfloor i / (q - p) \rfloor}(a(i \bmod{(q - p)} + p))
\end{align}
\end{subequations}
Let the duration of the recurrent segment be denoted by $L(\pi^R) = \tau(q) - \tau(p)$ and the bijective map be denoted by $\chi_{\pi^R}$.
\begin{thm}
For every patrol strategy $\pi$, there exists an $\epsilon$-approximate recurrent patrol strategy $\pi^R$.\label{thm_rps}
\end{thm}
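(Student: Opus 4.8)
The plan is to establish two properties of the strategy $\pi^R$ built in~\eqref{eq_rec}: that it is a well-defined recurrent patrol strategy, and that $J_{\pi^R} \leq (1 + \epsilon(D)) J_\pi$. Since Theorem~\ref{thm_dps} already gives $J_{\pi^D} \leq (1 + \epsilon(D)) J_\pi$, I would reduce the whole claim to comparing $\pi^R$ against $\pi^D$ and showing $J_{\pi^R} \leq J_{\pi^D}$; the stated bound then follows by transitivity. Thus the task becomes: prove that the Idleness cost incurred by $\pi^R$ never exceeds the cost that $\pi^D$ incurs over the recurrent segment $[\tau(p), \tau(q)]$.

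I would first verify feasibility and recurrence. Because $\mathcal{X}_{\pi^D}(\tau(p)) = \mathcal{X}_{\pi^D}(\tau(q))$ (condition C2), the ending position of every agent in the segment coincides with the starting position of some agent, a correspondence captured by the bijection $\chi$. Splicing $\pi^D_{pq}$ to itself through $\chi$ as in~\eqref{eq_rec} therefore yields a trajectory in which each agent, upon finishing its segment walk, seamlessly picks up the continuation dictated by $\chi$; the arrival/departure bookkeeping~\eqref{eq_tauij} is preserved because the departure offsets in $\theta(i)$ differ only by integer multiples of $L(\pi^R) = \tau(q) - \tau(p)$. This makes $\pi^R$ feasible and, by construction, periodic with period $L(\pi^R)$ up to the relabeling $\chi$. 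Since the agents are homogeneous, only the set of occupied positions (never the labels) enters the Idleness function~\eqref{eq_f}, so the collective Idleness evolution of $\pi^R$ is genuinely periodic with period $L(\pi^R)$.

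The core step is the cost comparison, where I would show that for every $t \geq 0$ the Idleness vector under $\pi^R$ is dominated, coordinate-wise, by the Idleness of $\pi^D$ on $[\tau(p), \tau(q)]$. Within the first period $[0, L(\pi^R))$ the agents replay the departures of $\pi^D_{pq}$, so any node already revisited has identical Idleness under both strategies, whereas a node not yet visited has Idleness exactly $t$ under $\pi^R$ (from the initial condition $I_k(0) = 0$) against $I_k(\tau(p)) + t \geq t$ under $\pi^D$; hence $I_k^R(t) \leq I_k^D(\tau(p) + t)$ throughout the first period. Condition C1, $\mathcal{I}_{\pi^D}(\tau(p)) = \mathcal{I}_{\pi^D}(\tau(q))$, together with the fact that every node is visited inside the segment, forces the Idleness vector at the end of the first period to equal $\mathcal{I}_{\pi^D}(\tau(p))$, so from the second period onward the Idleness of $\pi^R$ coincides exactly with the periodic repetition of $\mathcal{I}_{\pi^D}$ on $[\tau(p), \tau(q)]$. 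Combining both regimes gives $J_{\pi^R} = \max_{t \in [\tau(p), \tau(q)]} f_{\pi^D}(t) \leq J_{\pi^D} \leq (1 + \epsilon(D)) J_\pi$, and applying the same domination to each $v_k \in \mathcal{V}^T$ with Corollary~\ref{cor_dps} yields $I_k^R(t) \leq (1 + \epsilon(D)) T_k$, so the constraints are respected.

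The main obstacle I anticipate is making the periodicity argument fully rigorous in the presence of the relabeling $\chi$ and the mismatch between the initial condition $I_k(0) = 0$ and the generally nonzero starting Idleness $\mathcal{I}_{\pi^D}(\tau(p))$ of the segment. Homogeneity of the agents is the lever that resolves the first difficulty, since the cost in~\eqref{eq_f} sees only $\mathcal{I}_\pi(t)$ and is blind to which labelled agent occupies a node, making $\chi$ pure bookkeeping that does not change the cost. The second difficulty is dissolved by the domination inequality above, which shows the first-period transient can only lower Idleness relative to steady state; consequently the supremum over all $t \geq 0$ is attained within a single representative period and equals the segment maximum of $\pi^D$, completing the reduction.
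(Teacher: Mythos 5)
Your proposal is correct and follows essentially the same route as the paper: the paper's proof also reduces everything to the coordinate-wise domination $I^R_h(\tau) \leq I^D_h\bigl((\tau \bmod (\tau(q)-\tau(p))) + \tau(p)\bigr)$, which it asserts ``by construction,'' and then concludes $J_{\pi^R} \leq J_{\pi^D} \leq (1+\epsilon(D))J_\pi$ via Theorem~\ref{thm_dps}. You simply spell out the details the paper leaves implicit (the role of $\chi$ and agent homogeneity, the first-period transient versus the exact periodicity from the second period onward, and the constraint satisfaction via Corollary~\ref{cor_dps}).
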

\begin{proof}
Let $I^R_h(\tau)$ and $I^D_h(\tau)$ be the Idleness value of node $v_h$ at time $\tau$ under $\pi^R$ and $\pi^D$ respectively. By construction, $I^R_h(\tau) \leq I^D_h((\tau \bmod{\tau(q) - \tau(p)}) + \tau(p)), \forall \tau \in [0, \infty)$. Hence, $f_{\pi^R}(\tau, \mathcal{V}) \leq f_{\pi^D}(\tau, \mathcal{V})$, implying, $J_{\pi^R} \leq J_{\pi^D}$. From Theorem~\ref{thm_dps}, $\pi^D$ is an $\epsilon$-approximate solution to $\pi$. Hence, $\pi^R$ is also an $\epsilon$-approximate solution to $\pi$. \qed
\end{proof}

Next, we move from $\epsilon$-approximate to $\epsilon$-optimal strategy.

\subsection{$\epsilon$-optimal recurrent patrol strategy $\pi_\star^R$}
So far, we have established that given a patrol strategy $\pi$ with a cost of $J_\pi$ there exists a recurrent patrol strategy $\pi^R$ with a cost of $J_{\pi^R} \leq (1 + \epsilon(D))J_\pi$. We now generalize the recurrent patrol strategy solving Problem~\ref{prob_gpp} and show that there exists an $\epsilon$-optimal recurrent patrol strategy $\pi^R_\star$ to Problem~\ref{prob_gpp}.

Let $\Pi^R_D$ be the collection of all the recurrent patrol strategies with discretization constant $D$, which solves Problem~\ref{prob_gpp}. These strategies are characterized by a recurrent segment that repeats infinitely many times, and the departure instances are multiples of $D$. The need not be obtained through the above procedure. We state the results in this section for the set $\pi^R_D$.
Let $\Pi^R_D(J) \subseteq \Pi^R_D$ such that for each $\pi^R \in \Pi^R_D(J)$, $J_{\pi^R} \leq J$. 
% \begin{note}
    % We do not deal with the construction of sets $\Pi^R_D$ and $\Pi^R_D(J)$. 
% \end{note}
\begin{lem}
    The set $\Pi^R_D(J)$ is finite for any $J \in \mathbb{R}$. \label{thm_fin}
\end{lem}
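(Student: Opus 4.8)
The plan is to show that each strategy in $\Pi^R_D(J)$ is completely specified by a finite amount of data drawn from finite sets, so that only finitely many such strategies can exist. Concretely, a recurrent strategy is determined by (i) its recurrent segment, i.e.\ the finite block of departures over one period together with the period length $L(\pi^R)$, and (ii) the bijective relabelling map $\chi_{\pi^R}$ on $\mathcal{A}$. Since $\chi_{\pi^R}$ ranges over the finite symmetric group on $\mathcal{A}$, it suffices to bound the number of admissible recurrent segments.

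First I would bound the idleness so that the sampled configuration lives in a finite set. For any $p \ge 1$ we have $\lVert x \rVert_\infty \le \lVert x \rVert_p$, so the cost bound $J_{\pi^R}\le J$ together with~\eqref{eq_f} gives $\max_{v_k \in \mathcal{V}} \phi_k I^R_k(t) \le J/C$ for all $t$; combined with the feasibility constraints $I_k \le T_k$ on $\mathcal{V}^T$, this bounds the idleness of every node that affects the cost or is constrained. Because every departure instant is a multiple of $D$, each such idleness value is a multiple of $D$, so the idleness vector $\mathcal{I}_{\pi^R}(\tau)$ sampled at multiples of $D$ takes values in a finite set, exactly as in Lemma~\ref{thm_2}. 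Combined with the finite set of agent configurations $\mathcal{X}_{\pi^R}(\tau)$ from Lemma~\ref{thm_2}, the joint state $\bigl(\mathcal{I}_{\pi^R}(\tau),\mathcal{X}_{\pi^R}(\tau)\bigr)$ at each multiple of $D$ lies in a finite set $S$; write $N = \lvert S \rvert$.

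Next I would bound the period $L(\pi^R)$. Sampling the joint state at the multiples of $D$ inside one period yields $L(\pi^R)/D$ elements of $S$. The key step is to argue that one may take the recurrent segment to be \emph{irreducible}: if the same joint state recurred at two interior sampling instants, the block between them could be excised to produce a shorter recurrent segment generating a strategy with no larger cost, so a fundamental segment visits each element of $S$ at most once. This yields $L(\pi^R) \le N D$. With the period bounded, the number of departures in one segment is also bounded, since consecutive departures of a fixed agent are separated by at least the minimum edge weight $\underline{w}$, so each agent departs at most $\lceil L(\pi^R)/\underline{w} \rceil$ times and the total number of departures is at most $\lvert \mathcal{A} \rvert \lceil N D/\underline{w} \rceil$. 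Each departure is described by a departure time in the finite set $\{0, D, \dots, L(\pi^R)\}$, a node in $\mathcal{V}$, and an agent in $\mathcal{A}$, with the dwell times then fixed by~\eqref{eq_tauij}; hence only finitely many segments arise.

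Finally, combining the finitely many admissible segments with the finitely many choices of $\chi_{\pi^R} \in \mathrm{Sym}(\mathcal{A})$ shows that $\Pi^R_D(J)$ is finite. I expect the main obstacle to be the period bound: making rigorous that one may restrict to irreducible recurrent segments, so that a repeated joint state within a period genuinely collapses to a shorter recurrent strategy rather than describing a distinct longer one. This requires checking that the joint state $(\mathcal{I}_{\pi^R},\mathcal{X}_{\pi^R})$, read at multiples of $D$, is a sufficient descriptor for splicing, i.e.\ that excising a same-state block leaves a feasible strategy with the same recurrence structure; this is precisely where the discretization to multiples of $D$ and the feasibility bound on idleness do the real work.
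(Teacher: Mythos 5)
Your overall strategy is the same as the paper's: bound the idleness via the cost so that the sampled joint state $(\mathcal{I}_{\pi^R}(jD),\mathcal{X}_{\pi^R}(jD))$ lives in a finite set (Lemma~\ref{thm_2}), then argue that each strategy in $\Pi^R_D(J)$ is uniquely encoded by a finite amount of data from finite sets. The place where you go beyond the paper is also the place where your argument breaks: the uniform bound on the period $L(\pi^R)$. Your excision argument shows that \emph{if} a joint state repeats at two interior sampling instants, one can splice out the intervening block and obtain a \emph{different}, shorter recurrent strategy of no larger cost. That does not remove the original long-period strategy from $\Pi^R_D(J)$, which is defined as the set of \emph{all} recurrent strategies with cost at most $J$, not the set of irreducible ones. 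Concretely, if the finite state set admits two distinct cycles through a common joint state (which it generically does, since from a given $(\mathcal{I},\mathcal{X})$ different strategies may choose different next edges and dwell times), then concatenating those cycles in different orders and multiplicities yields infinitely many pairwise distinct recurrent strategies, all visiting the same finite set of states and hence all with cost at most $J$, but with unbounded minimal period. So the inference ``a fundamental segment visits each element of $S$ at most once, hence $L(\pi^R)\le ND$'' does not hold for the set as defined; it holds only after restricting to irreducible segments or quotienting by the reduction relation, neither of which is built into $\Pi^R_D(J)$.

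For what it is worth, you have correctly located the crux: the paper's own proof does not establish a uniform period bound either, but simply asserts $L(\pi^R) < M < \infty$ ``for some constant $M$'' and proceeds; finiteness of the set genuinely requires $M$ to be uniform over $\Pi^R_D(J)$, which is exactly the point your excision argument was meant to supply and does not. The remainder of your proposal (boundedness of idleness from $\lVert x\rVert_\infty \le \lVert x\rVert_p$ and~\eqref{eq_f}, the caveat about nodes with zero weight outside $\mathcal{V}^T$, the count of at most $\vert\mathcal{A}\vert\lceil L(\pi^R)/\underline{w}\rceil$ departures per segment, and the finitely many choices of $\chi_{\pi^R}$) is sound and, once a uniform period bound is granted, delivers the conclusion by the same counting the paper performs with its sequence $\mathcal{S}_{\pi^R}$.
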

\begin{proof}
% Let $\pi^R \in \Pi^R_D(J)$ for some $J \in \mathbb{R}$. 
% Therefore $J_{\pi^R} \leq J$, implying the Idleness function $f_{\pi^R}(t, \mathcal{V}) \leq J$ for $t \geq 0$. Thus, there exists some constant $M$ such that the Idleness value $I_h^R(t) \leq M$ for all $v_h \in \mathcal{V}$ and for $t \geq 0$. Also, the duration of the recurrent segment $L(\pi^R) \leq M$.
% From Theorem~\ref{thm_2}, the Idleness vector $\mathcal{I}_{\pi^R}(\theta(i))$ and the set of agents' positions $\mathcal{X}_{\pi^R}(\theta(i))$ belong to a finite set for $i = 1, 2, \ldots$.
For each $\pi^R \in \Pi_D^R(J)$, $J_{\pi^R} \leq J$, implying, the Idleness value $I_h(\tau)$ is bounded for each node $v_h$ and for all $\tau$. Therefore, from Lemma~\ref{thm_2}, the Idleness vector $\mathcal{I}_{\pi^R}(\tau(i))$ and the agent positions $\mathcal{X}_{\pi^R}(\tau(i))$ belong to a finite set for all $i = 1, 2, \ldots$. 

As $\pi^R$ is a recurrent strategy, it consists of a recurrent segment of finite length $L(\pi^R)$, i.e., $L(\pi^R) < M < \infty$, for some constant $M$. 
We have (1) $\mathcal{I}_{\pi^R}(\tau) = \mathcal{I}_{\pi^R}(\tau + L(\pi^R))$ and (2) $\mathcal{X}_{\pi^R}(\tau) = \mathcal{X}_{\pi^R}(\tau + L(\pi^R))$ for all $\tau \geq L(\pi^R)$\footnote{For $\tau < L(\pi^R)$, $\mathcal{I}_{\pi^R}(\tau) \leq \mathcal{I}_{\pi^R}(\tau + L(\pi^R))$ as Idleness values are set to $0$ at the beginning.}. 

Consider the interval $\left[L(\pi^R), 2L(\pi^R)\right)$. Let us define a sequence of 2-tuples \[\mathcal{S}_{\pi^R} = \{(\mathcal{I}_{\pi^R}(jD), \mathcal{X}_{\pi^R}(jD)) \: \vert \: L(\pi^R) \leq jD < 2L(\pi^R), \: j \in \mathbb{N}\}\] consisting of Idleness vectors and agent positions at each multiple of $D$ within the interval.
% Let $\mathcal{S}_{\pi^R} = \{(\mathcal{I}_{\pi^R}(jD), \mathcal{X}_{\pi^R}(jD)) \: \vert \: L(\pi^R) \leq jD < 2L(\pi^R), \: j \in \mathbb{N}\}$ be the sequence of Idleness vectors and agent positions at multiples of $D$ in the time interval $\left[L(\pi^R), 2L(\pi^R)\right)$. 
The patrol strategy $\pi^R$ can be uniquely identified with the sequence $\mathcal{S}_{\pi^R}$. 
% There exists a unique $\mathcal{S}_\pi$ for every $\pi \in \Pi^R_D(J)$. 
The sequence $\mathcal{S}_{\pi^R}$ is characterized by the length $L(\pi^R)$, the sequence of Idleness vectors $\mathcal{I}_{\pi^R}(\cdot)$ and the sequence of agent positions $\mathcal{X}_{\pi^R}(\cdot)$.
We show that each of them can take finitely many values.

For a given patrol strategy $\pi^R$, the length of the recurrent segment $L(\pi^R)$ is a multiple of $D$ and bounded by $M$. Hence, $L(\pi^R)$ belongs to a finite set. Each element in the sequence of Idleness vectors $\mathcal{I}_\pi(\cdot)$ belongs to a finite set, and the length of the sequence is finite, implying the sequence of Idleness vectors $\mathcal{I}_{\pi^R}(\cdot)$ belongs to a finite set. The same holds for the sequence of agent positions $\mathcal{X}_{\pi^R}(\cdot)$. Hence the set $\{\mathcal{S}_{\pi^R} \vert \pi^R \in \Pi^R_D(J)\}$ is finite, implying $\Pi^R_D(J)$ is a finite set. \qed
\end{proof}

Now, we present the main result of this paper.
\begin{thm}
    Assume that the Problem~\ref{prob_gpp} has a solution. There always exists a recurrent patrol strategy $\pi^R_\star$ for this problem which is $\epsilon$-optimal.\label{thm_opt}
\end{thm}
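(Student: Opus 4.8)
The plan is to combine the existence result for approximate recurrent strategies (Theorem~\ref{thm_rps}) with the finiteness of bounded-cost recurrent strategies (Lemma~\ref{thm_fin}). Intuitively, Theorem~\ref{thm_rps} guarantees that at least one recurrent strategy performs almost as well as any prescribed strategy, so applying it to the true optimum $\pi_\star$ furnishes a recurrent competitor whose cost lies within a factor $(1 + \epsilon(D))$ of $J_{\pi_\star}$. Finiteness then upgrades this \emph{approximate} guarantee into an \emph{optimal-within-the-recurrent-class} one, because a finite nonempty family of strategies always contains a cost-minimizer, so the best recurrent cost is actually attained rather than merely approached.

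Concretely, I would proceed as follows. By hypothesis Problem~\ref{prob_gpp} admits an optimal solution $\pi_\star$ with cost $J_{\pi_\star}$ satisfying $I_k(t) \leq T_k$ for all $v_k \in \mathcal{V}^T$. First I would apply Theorem~\ref{thm_rps} to $\pi = \pi_\star$ to obtain a recurrent strategy $\pi^R$ with $J_{\pi^R} \leq (1 + \epsilon(D)) J_{\pi_\star}$; since its idleness never exceeds that of the intermediate discrete strategy $\pi^D$, Corollary~\ref{cor_dps} yields $I_k(t) \leq (1 + \epsilon(D)) T_k$ for every $v_k \in \mathcal{V}^T$. Hence $\pi^R$ is a feasible recurrent strategy, so $\pi^R \in \Pi^R_D$ and in particular $\Pi^R_D\big((1 + \epsilon(D)) J_{\pi_\star}\big) \neq \emptyset$. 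Next I would invoke Lemma~\ref{thm_fin} to conclude that this set is finite. A finite nonempty collection of strategies, each carrying a real-valued cost, admits a cost-minimizing element, which I would denote $\pi^R_\star$. Since $\pi^R$ itself lies in this set, $J_{\pi^R_\star} \leq J_{\pi^R} \leq (1 + \epsilon(D)) J_{\pi_\star}$, and because $\pi^R_\star \in \Pi^R_D$ it satisfies $I_k(t) \leq (1 + \epsilon(D)) T_k$ on $\mathcal{V}^T$. These two bounds are exactly the definition of $\epsilon$-optimality, completing the argument. One may further note that $\pi^R_\star$ minimizes cost over all of $\Pi^R_D$, not merely over the truncated set, since any recurrent strategy outside $\Pi^R_D\big((1 + \epsilon(D)) J_{\pi_\star}\big)$ has cost strictly above the threshold and hence above $J_{\pi^R_\star}$.

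The heavy lifting has already been done in the preceding results, so the remaining subtlety is mostly organizational. The one point that genuinely requires care is the choice of comparison set: nonemptiness must come from applying Theorem~\ref{thm_rps} specifically to the optimizer $\pi_\star$ (an arbitrary strategy would only bound the recurrent cost relative to itself, not relative to the optimum), while finiteness must be invoked at the threshold $(1 + \epsilon(D)) J_{\pi_\star}$ so that both nonemptiness and finiteness hold simultaneously. I would also be explicit that the relevant feasibility is the relaxed one, $I_k(t) \leq (1 + \epsilon(D)) T_k$, inherited through Corollary~\ref{cor_dps}; this is what keeps $\pi^R$ inside $\Pi^R_D$ and is consistent with the stated notion of $\epsilon$-optimality. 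Finally, the role of Lemma~\ref{thm_fin} is precisely to guarantee attainment of the minimum, since the infimum over an infinite strategy class need not a priori be achieved; this is the step that converts an approximation bound into an existence-of-optimum statement.
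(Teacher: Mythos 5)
Your proposal is correct and follows essentially the same route as the paper: apply Theorem~\ref{thm_rps} to the optimizer $\pi_\star$ to populate the set of bounded-cost recurrent strategies, invoke Lemma~\ref{thm_fin} for finiteness, and take the cost-minimizer as $\pi^R_\star$. If anything, you are more explicit than the paper in fixing the threshold at $J = (1+\epsilon(D))J_{\pi_\star}$ and in tracking the relaxed feasibility constraint through Corollary~\ref{cor_dps}, both of which the paper's terser argument leaves implicit.
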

\begin{proof}
From Theorem~\ref{thm_rps}, for every patrol strategy $\pi$ there exists an $\epsilon$-approximate recurrent patrol strategy $\pi^R \in \Pi^R_D$. From Lemma~\ref{thm_fin}, $\Pi^R_D(J)$ is finite for a given $J$. We consider some $J$ such that $\Pi^R_D(J)$ is non-empty. Then, \[\pi_\star^R := \argmin_{\pi^R \in \Pi^R_D(J)}{J_{\pi^R}}\] exists and
the cost of every $\epsilon$-approximate $\pi^R$ is at least that of $\pi^R_\star$. Therefore, $\pi_\star^R$ is $\epsilon$-optimal recurrent patrol strategy to Problem~\ref{prob_gpp}. \qed
\end{proof}
\begin{rem}
The reader can appreciate the significance of this result from the following example. \cite{afshani_et_al} gives a $2(1 - 1/k)$ factor approximation to the min-max latency problem where $k$ is the number of agents. We improve this factor to almost $D/\underline{w}$ where $D$ is a parameter (discretization constant), and $\underline{w}$ is the minimum edge weight of the graph. Hence, $\epsilon$ is independent of the number of patrol agents, and $\epsilon$ is arbitrarily small by selecting $D$ appropriately. Also, our result holds for a broader class of problems as defined in the General Patrol Problem.
\end{rem}

\subsection{Algorithm to generate $\epsilon$-approximate recurrent patrol strategy.}
We define two algorithms to generate a $\epsilon$-approximate recurrent patrol strategy for a given patrol strategy solving Problem~\ref{prob_gpp}. The first algorithm discretizes the given strategy described in Algorithm~\ref{alg_disc}. The second algorithm determines a recurring segment in the discretized strategy obtained from Algorithm~\ref{alg_disc} and returns a recurrent patrol strategy as given in Algorithm~\ref{alg_rec}.

\begin{algorithm}
    \caption{Discrete patrol strategy}
    \label{alg_disc}
    \emph{Input}-$D, {\pi} = \{(t(i), r(i), v(i), a(i))\}_{i = 1}^T$
    \emph{Output}-$\pi^D$
    \begin{algorithmic}[1]
        \State $\tau(1) \leftarrow 0, \: \rho(1) \leftarrow 0, \: \bar{d}(1) \leftarrow 0, \: d(1) \leftarrow 0$
        \State ${\pi}^D \leftarrow \{(\tau(1), \rho(1), v(1), a(1))\}$
        \For {$(t(i), r(i), v(i), a(i)) ~ i = 2, \dots, T$}
        \State{$m \leftarrow Last Instance(a(i))$}
        \State{$\tau^\prime \leftarrow \tau(m) + D \left\lceil \frac{w_{ab} + r(i)}{D} \right\rceil$}
        \State{$\tau^{\prime \prime} \leftarrow \max \left(\tau(i - 1), \tau^\prime \right)$}
        \State{$d^\prime \leftarrow \tau^{\prime \prime} - t(i)$}
        \State{$d(i) \leftarrow d^\prime + D \lfloor \max {(\bar{d}(i - 1) - d^\prime, 0)}/D \rfloor$}
        \State {$\tau(i) \leftarrow t(i) + d(i)$}
        \State{$\rho(i) \leftarrow d(i) - d(m) + r(i)$}
        \State {$\bar{d}(i) \leftarrow \max \left( d(i), \bar{d}(i - 1) \right)$}
        \State ${\pi}^D \leftarrow {\pi}^D \cup \{(\tau(i), \rho(i), v(i), a(i))\}$ 
        \EndFor
        \State \Return ${\pi}^D$
    \end{algorithmic}
\end{algorithm}
In Algorithm~\ref{alg_disc}, the inputs are the discretization constant $D$ and patrol strategy ${\pi} = \{(t(i), r(i), v(i), a(i))\}_{i = 1}^T$. It returns a discrete patrol strategy $\pi^D$. The for-loop (Lines $3-12$) iterates over each instance $i$ in ${\pi}$  using~\eqref{eq_update} and~\eqref{eq_help} and ensures that $\pi^D$ satisfies~\eqref{eq_step1} and~\eqref{eq_step2}.
We run the loop over a finite number of departures $T$. We select $T$ arbitrarily but large enough to ensure Algorithm~\ref{alg_rec} returns a solution.

\begin{algorithm}
    \caption{Recurrent patrol strategy} \label{alg_rec}
    \emph{Input}-$\{\mathcal{I}_{\pi^D}(\tau(i))\}_{i = 1}^T, \{\mathcal{X}_{\pi^D}(\tau(i))\}_{i = 1}^T$
    \emph{Output}-$(p, q)$
    \begin{algorithmic}[1]
        \For{$q = 2, \ldots, T, \textbf{ and } p = 1, \ldots, q - 1$}
        \If{$\mathcal{I}_{\pi^D}(\tau(p)) = \mathcal{I}_{\pi^D}(\tau(q)), ~ \mathcal{X}_{\pi^D}(\tau(p)) = \mathcal{X}_{\pi^D}(\tau(q))$} 
        \State{\Return {$(p, q)$}}
        \EndIf
        \EndFor
        \Return $Fail$
    \end{algorithmic}
\end{algorithm}

We can generate the Idleness vectors $\mathcal{I}_{\pi^D}(\tau(i)$ and agent positions $\mathcal{X}_{\pi^D}(\tau(i))$ for $i = 1, \ldots, T$ for the discrete patrol strategy $\pi^D$ obtained in Algorithm~\ref{alg_disc}.
Algorithm~\ref{alg_rec} takes these Idleness vectors and agent positions as inputs. It searches for time instances $p$ and $q$ satisfying C1 and C2 in Section~\ref{sec_rec} as given in Line $2$ and returns them as the output. If the output fails, we increase $T$ and rerun Algorithm~\ref{alg_disc} and then~\ref{alg_rec} and repeat the process until we get a pair of time instances. We, then, determine the mapping $\chi(\cdot)$ and the recurrent patrol strategy $\pi^R$ from~\eqref{eq_rec}.

\section{Simulation Results} \label{sec_sim}
In this section, we validate Thereom~\ref{thm_rps} using extensive simulations of patrol strategies in real-life environments. 

\begin{algorithm}[ht]
    \caption{Greedy-Random Patrol} \label{alg_grps}
    \emph{Input}- Time $t$, Agent $a$, Node $v_i$, Idleness vector $\mathcal{I}$, Constants $\gamma, \lambda$
    \emph{Output}- Go to Node $v_j$, Departure Instance $\tau$
    \begin{algorithmic}[1]
        \State $v_j \leftarrow \max_{v_k \in \mathcal{N}(v_i)} I_k(t)$ 
        \State $\tau \leftarrow w_{ij}$
        \If{$x \sim U([0, 1]) \leq \gamma$} $\tau \leftarrow (1 + y \sim exp(\lambda))$
        \EndIf
        \State {\Return $(v_j, \tau)$}
    \end{algorithmic}
\end{algorithm}

Greedy-Random Patrol (Algorithm~\ref{alg_grps}) generates the patrol strategy that acts as input to Algorithm~\ref{alg_disc}. It is an online algorithm; the agents are assigned nodes on the go. When an agent $a_h$ reaches node $v_i$, it is assigned the neighbour $v_j$ with maximum Idleness value (Line $1$). The departure time from $v_i$ is delayed at arbitrary instances (Line $3$) to incorporate randomness in the patrol strategy. We set the probability of such cases to $\gamma = 0.0001$. The amount of delay is an exponential random variable with $\lambda = 1$, giving a realistic patrol scenario of agents delayed due to traffic and other circumstances.

\begin{figure}[ht]
\fbox{\parbox{\textwidth}{
    \centering
    \subfloat[Campus A]{\includegraphics[width=0.32\textwidth]{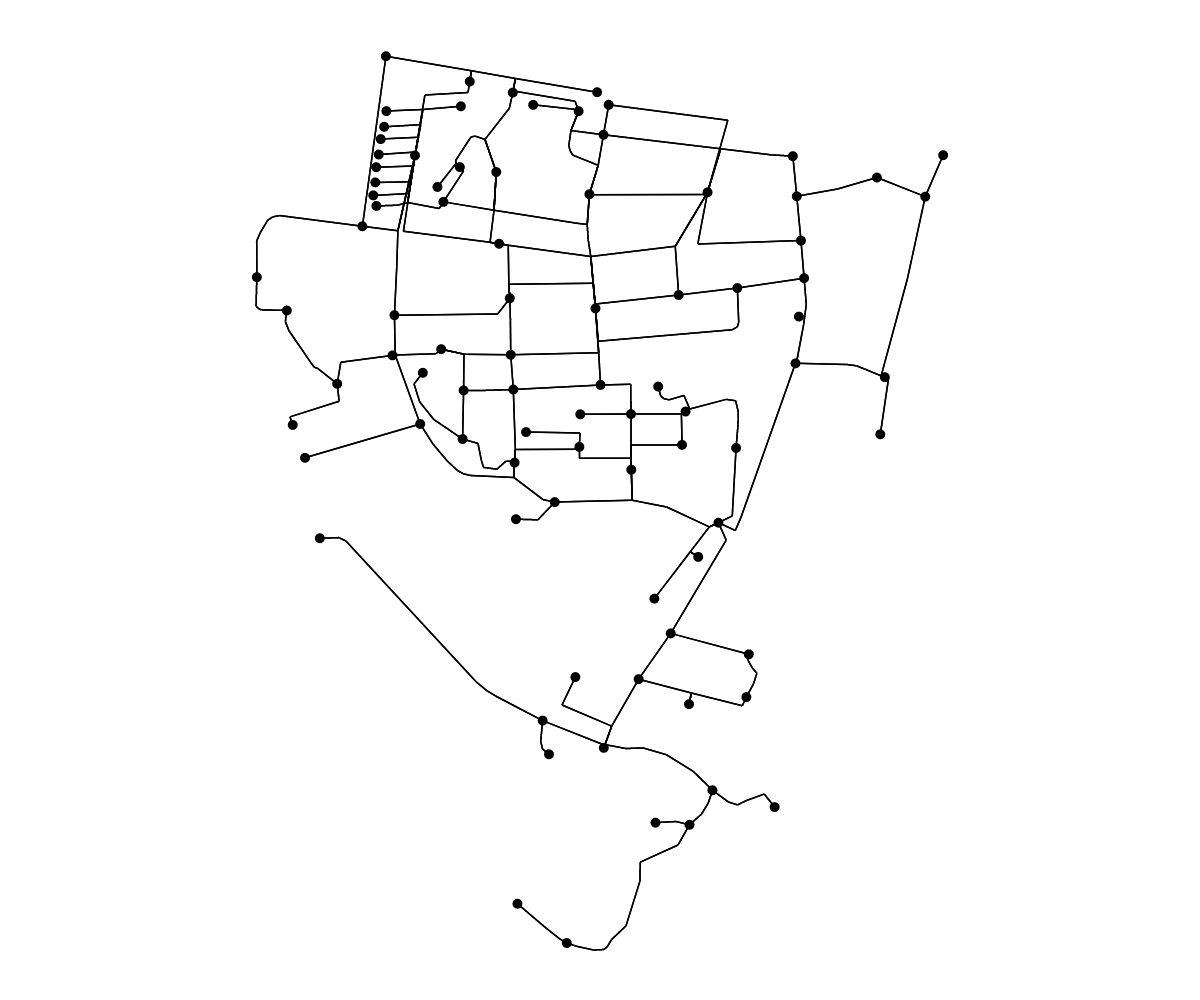}}
    \subfloat[Campus B]{\includegraphics[width=0.32\textwidth]{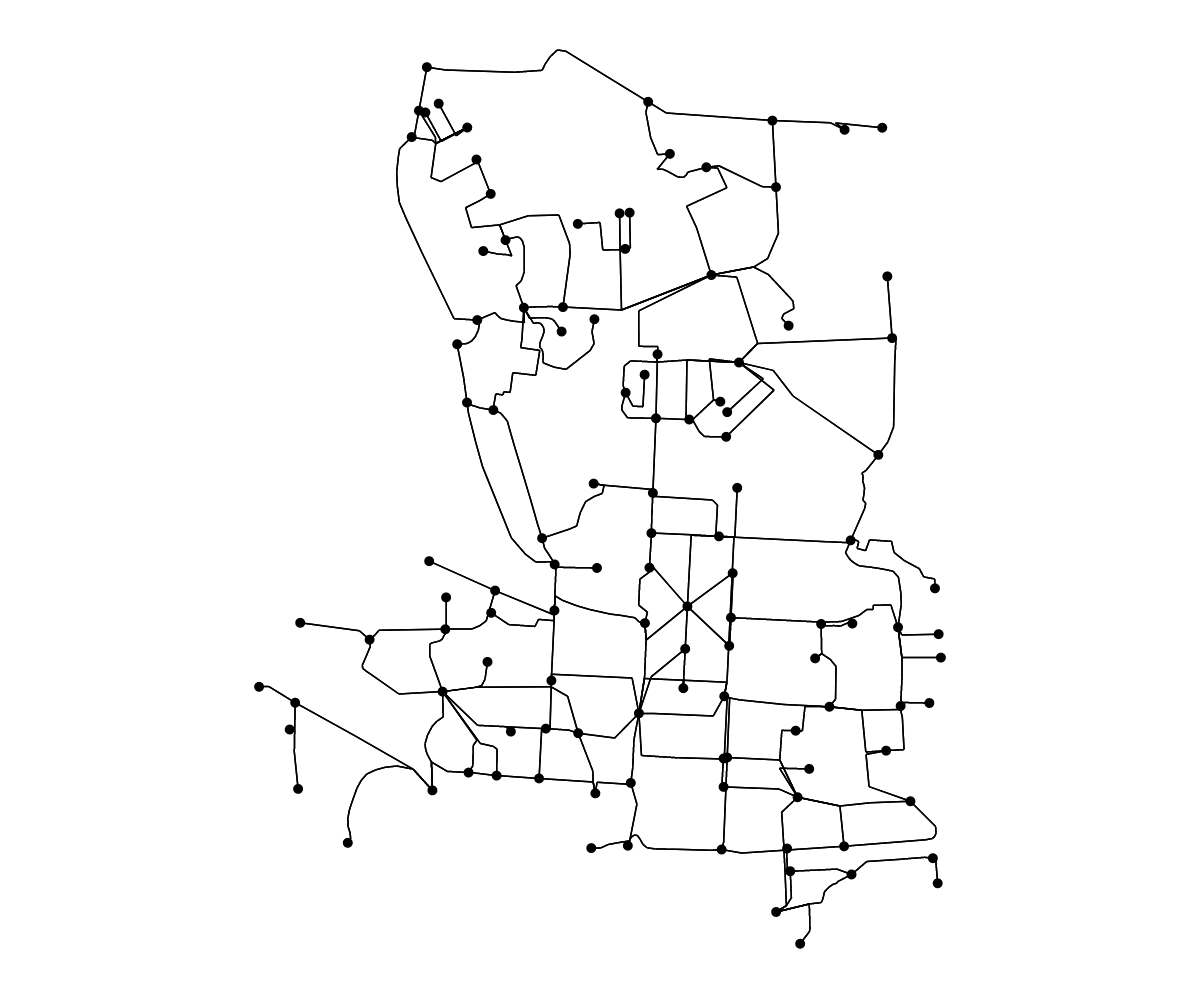}}
    \subfloat[Campus C]{\includegraphics[width=0.32\textwidth]{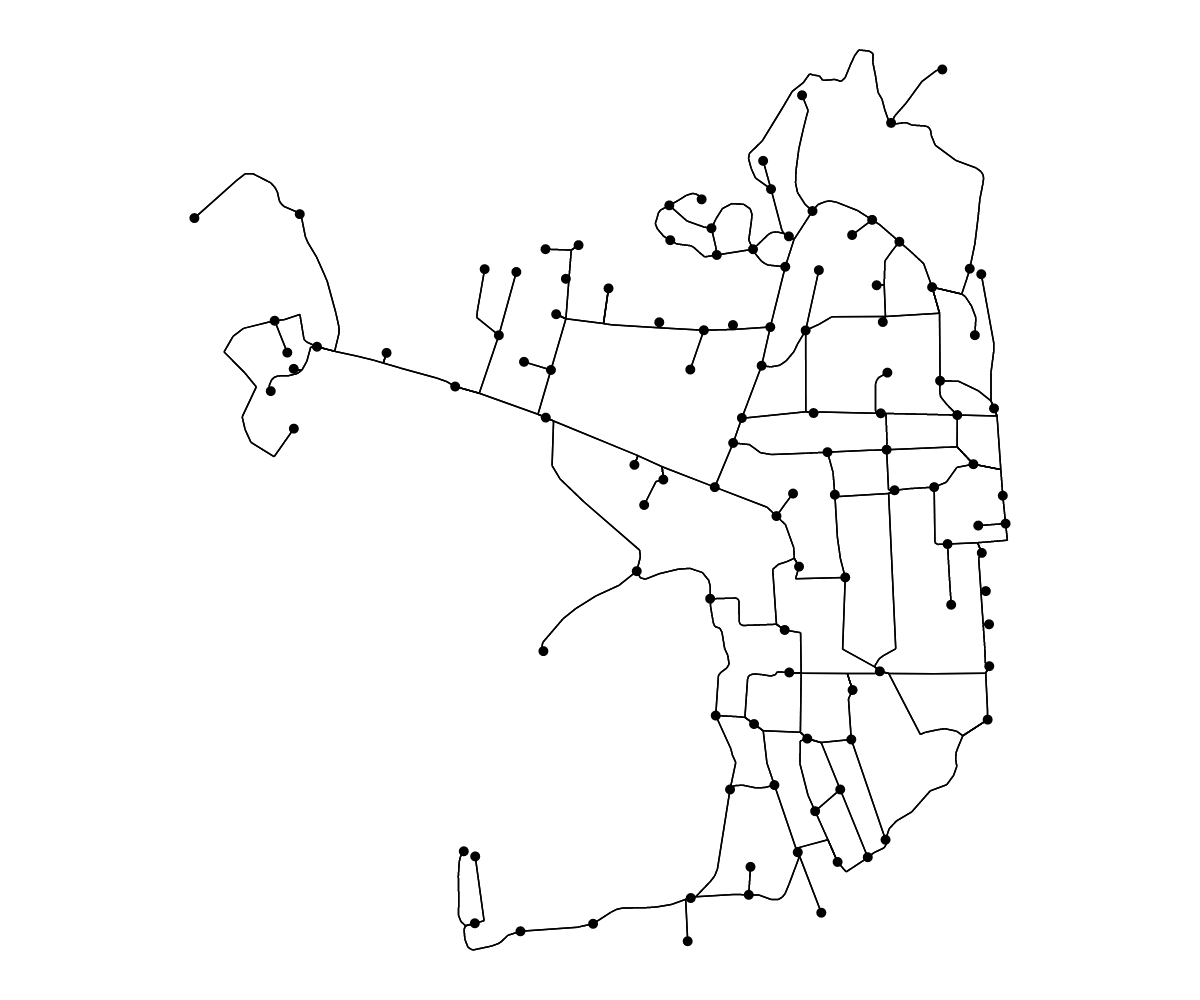}}\\
    \subfloat[Campus D]{\includegraphics[width=0.32\textwidth]{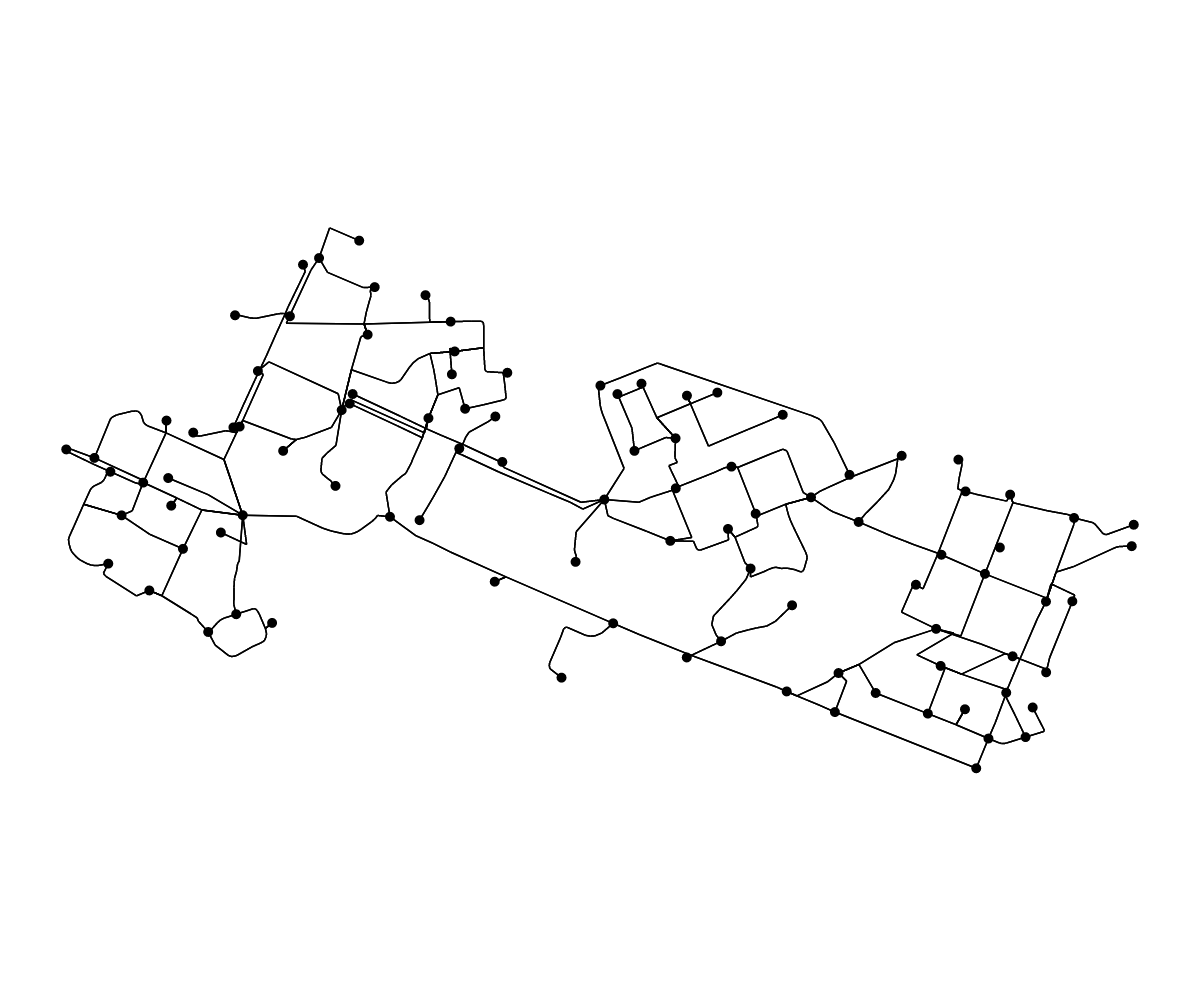}}
    \subfloat[Campus E]{\includegraphics[width=0.32\textwidth]{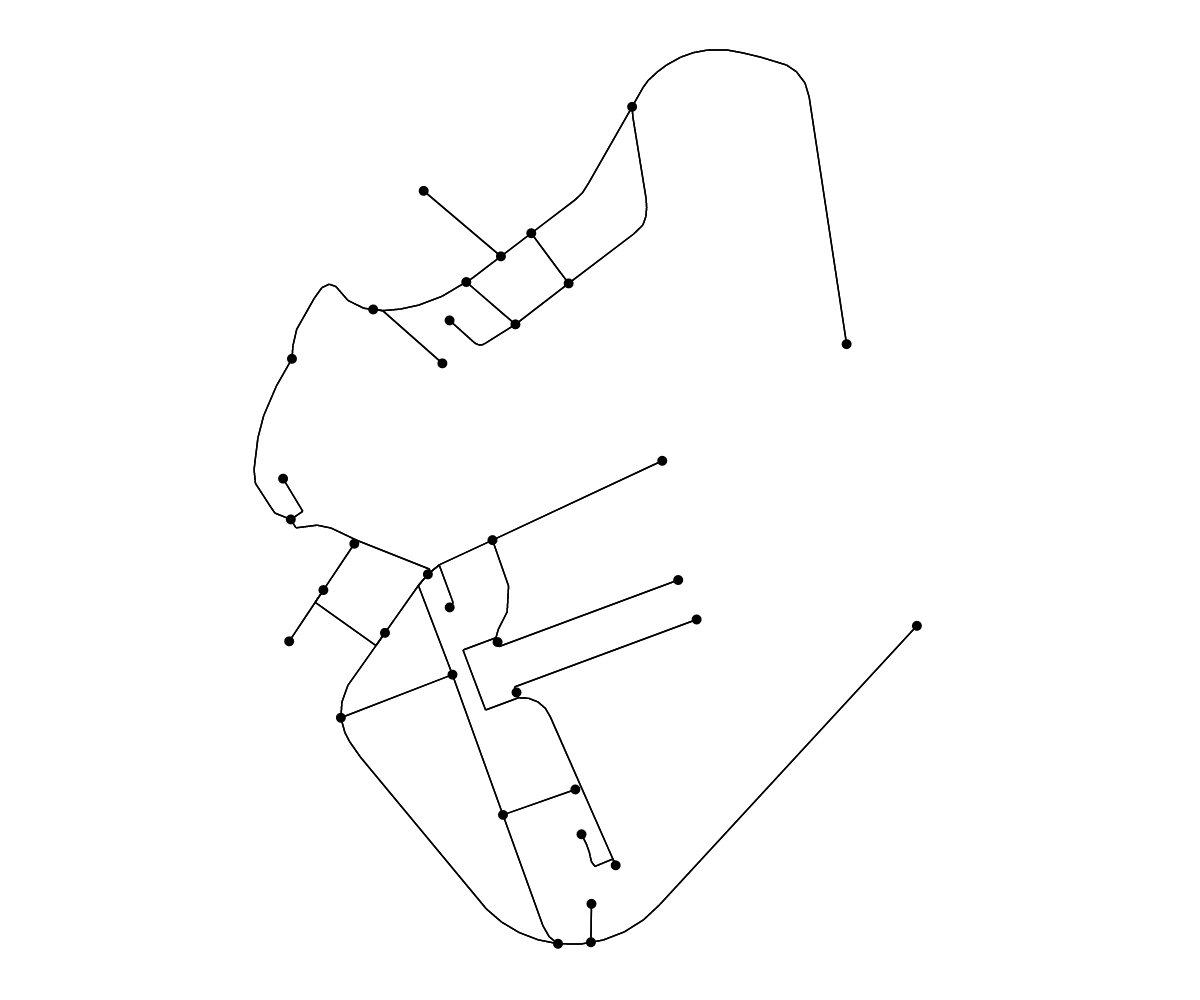}}
    \subfloat[Campus F]{\includegraphics[width=0.32\textwidth]{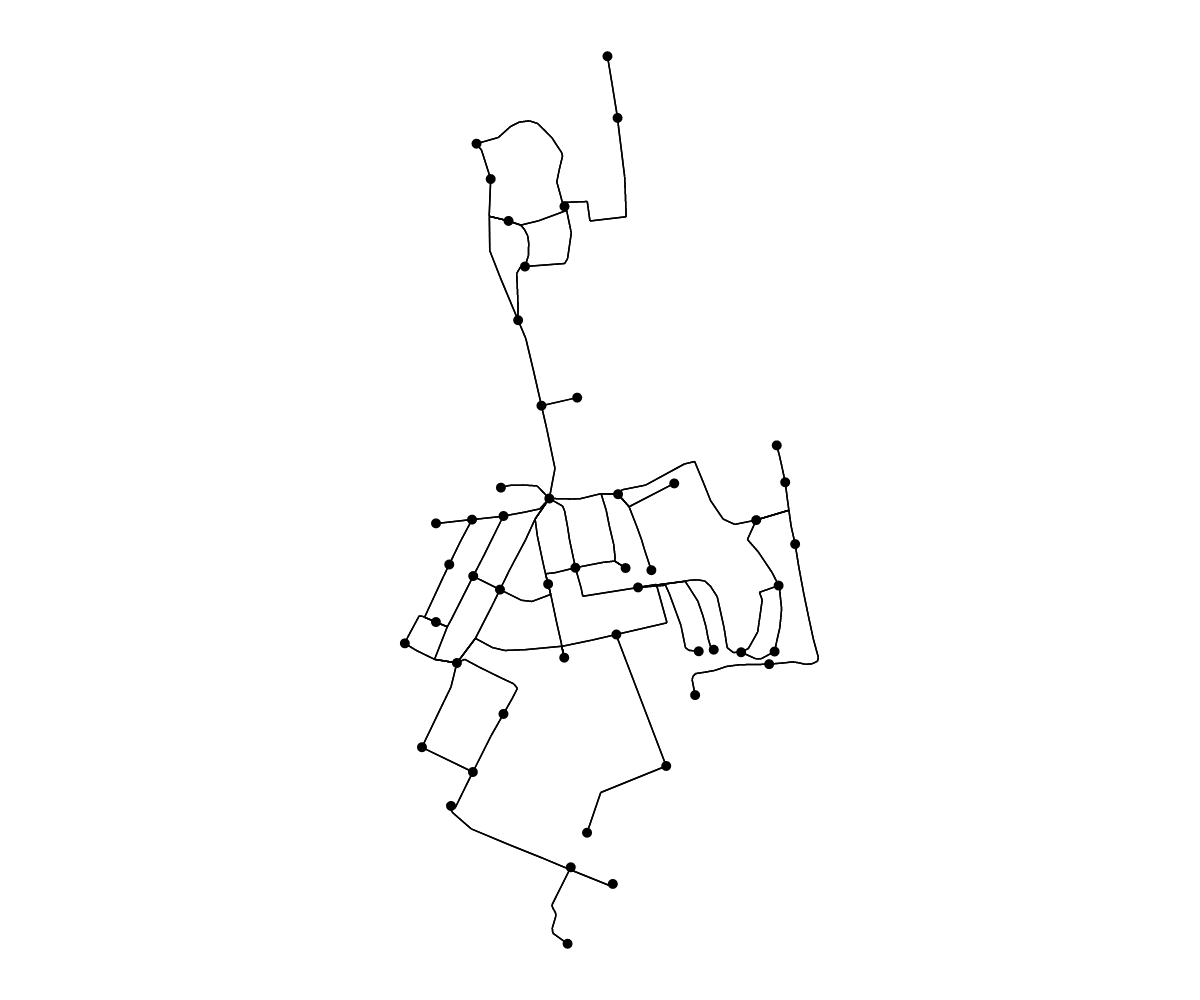}}
    }}
    \caption{Different real-world campus environments used for simulations.}\label{fig_campus}
\end{figure}

We use six real-world campus environments (Figure~\ref{fig_campus}). Campuses A, C, and E have long boundary walls, while the other layouts are more compact. Campus F has a small satellite portion connected via a single road segment separate from the main campus. 

\begin{table}[ht]
    \centering
    \begin{tabular}{|c|c|c|c|c|c|}
    \hline
    Layout & Type & $\epsilon(1)$ & $\vert \mathcal{V} \vert$ & $\vert \mathcal{E} \vert$ & $d_{avg}$\\
    \hline
    Campus A & City Block & 0.197 & 87 & 232 & 2.667 \\
    Campus B & University & 0.178 & 116 & 336 & 2.897\\
    Campus C & University & 0.192 & 113 & 288 & 2.549\\
    Campus D & University & 0.171 & 97 & 266 & 2.742\\
    Campus E & Military Base & 0.189 & 36 & 80 & 2.222\\
    Campus F & University & 0.185 & 50 & 128 & 2.56\\
    \hline
    \end{tabular}
    \caption{Attributes of the Campus Environments.}
    \label{tab_layout_details}
\end{table}

Table~\ref{tab_layout_details} details each environment's attributes. We run simulations with $5$ to $10$ agents patrolling each campus. Each run is of 432,000 seconds (5 days in simulation). The velocity of the agents is $10 m/s$. The value of $\epsilon(1)$ is set to ${1}/{\underline{w}}$. We determine five different recurrent patrol strategies for each input scenario with discretization constants $D = \{1, 2, 3, 4, 5\}$. The evaluation metrics are as follows-
\begin{itemize}
    \item Graph Maximum Idleness, \[GMI = \max_{t \geq 0, v_k \in \mathcal{V}} I_k(t)\]
    \item Graph Average Idleness, \[GAI = \max_{t \geq 0} \frac{\sum_{v_k \in \mathcal{V}} I_k(t)}{\vert \mathcal{V} \vert}\]
\end{itemize}
\begin{figure}[ht]
    \centering
    \fbox{\subfloat[]{\includegraphics[width=\textwidth]{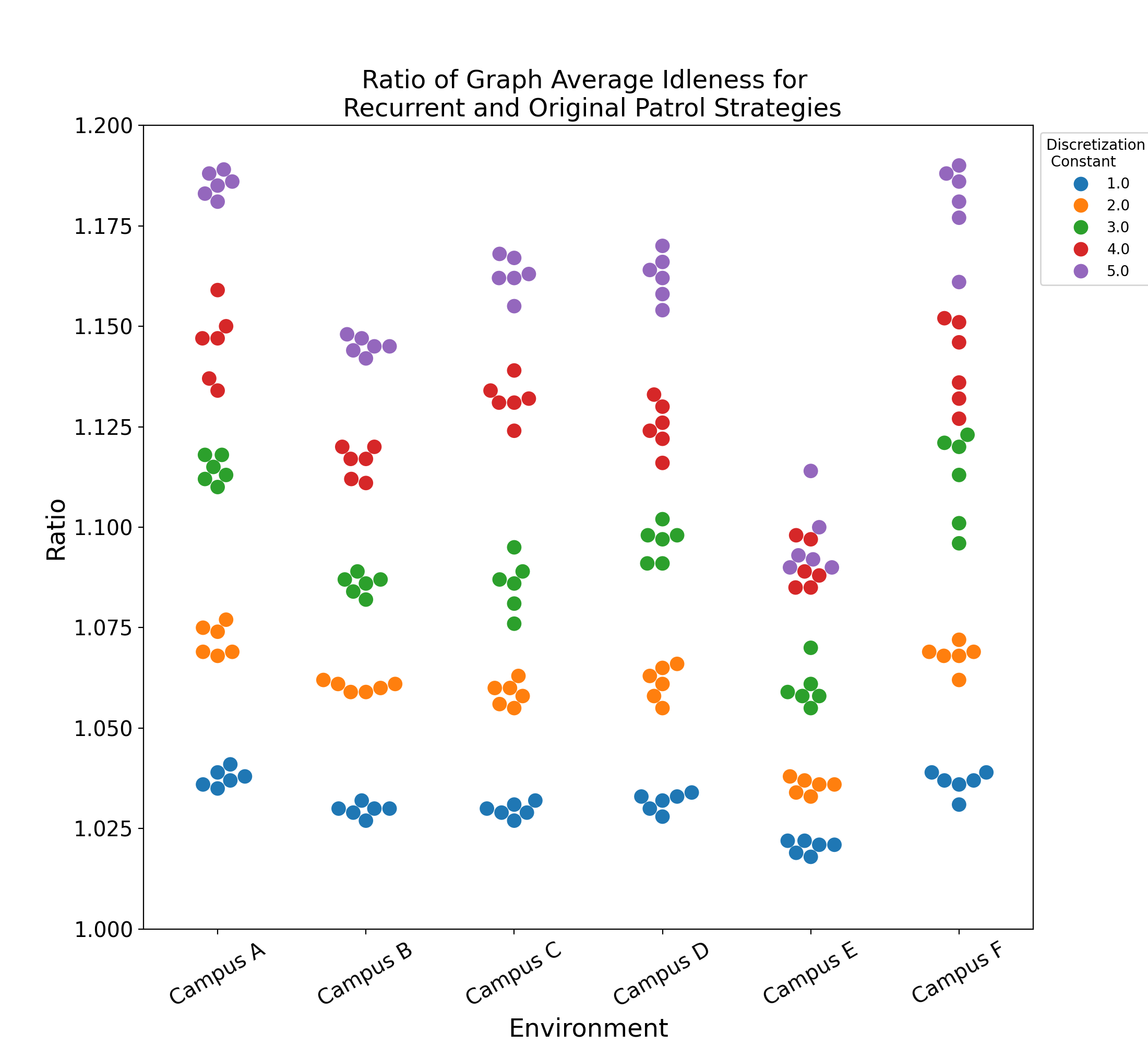}}} \\
    \fbox{\subfloat[]{\includegraphics[width=\textwidth]{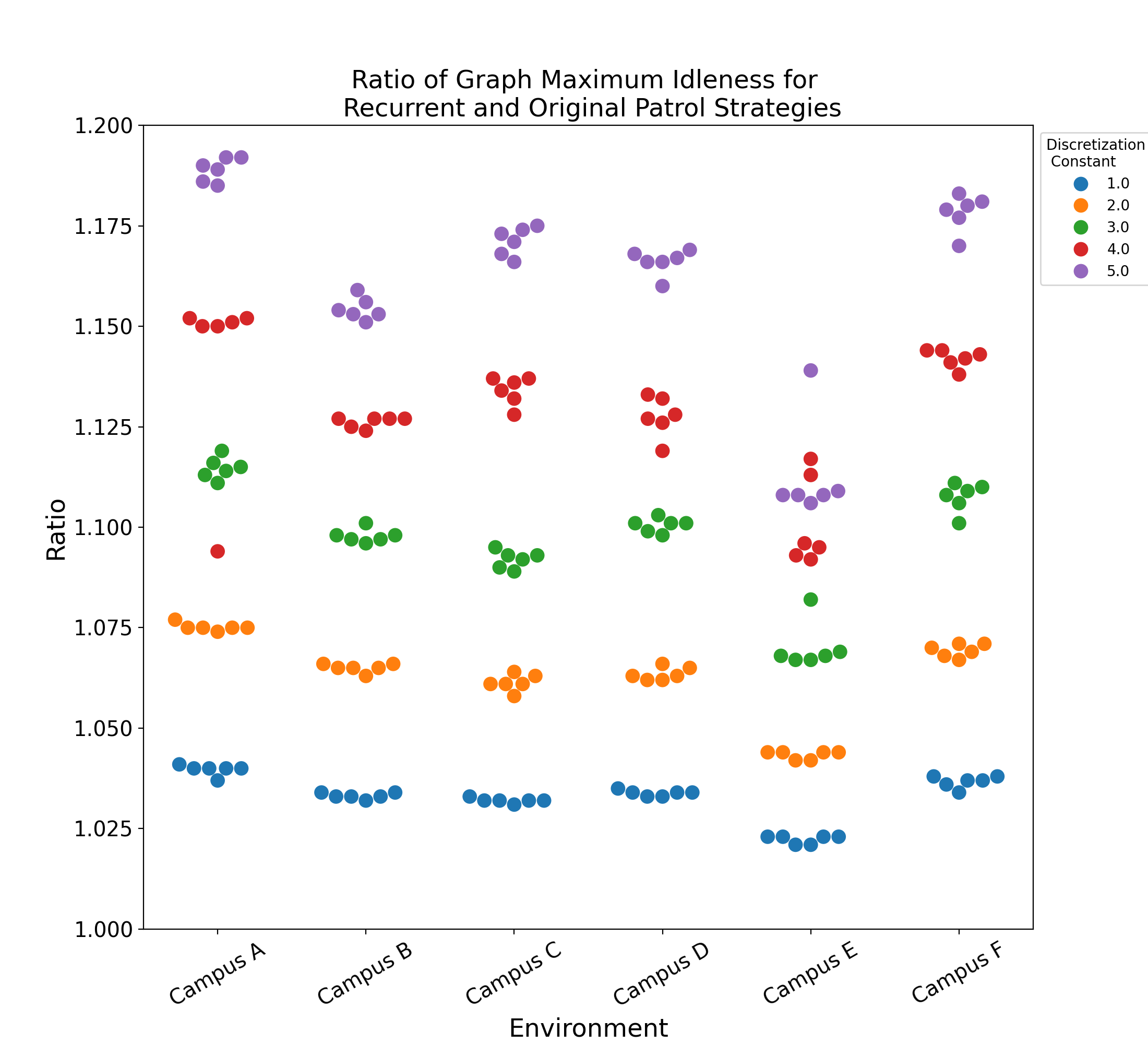}}}
    \caption{Ratios of (a) Graph Average Idleness and (b) Graph Maximum Idleness values under recurrent patrol strategies for various $D$ values to that under original patrol strategy.}\label{fig_gi}
\end{figure}
\begin{figure}[ht]
    \centering
    \fbox{\includegraphics[width=\textwidth]{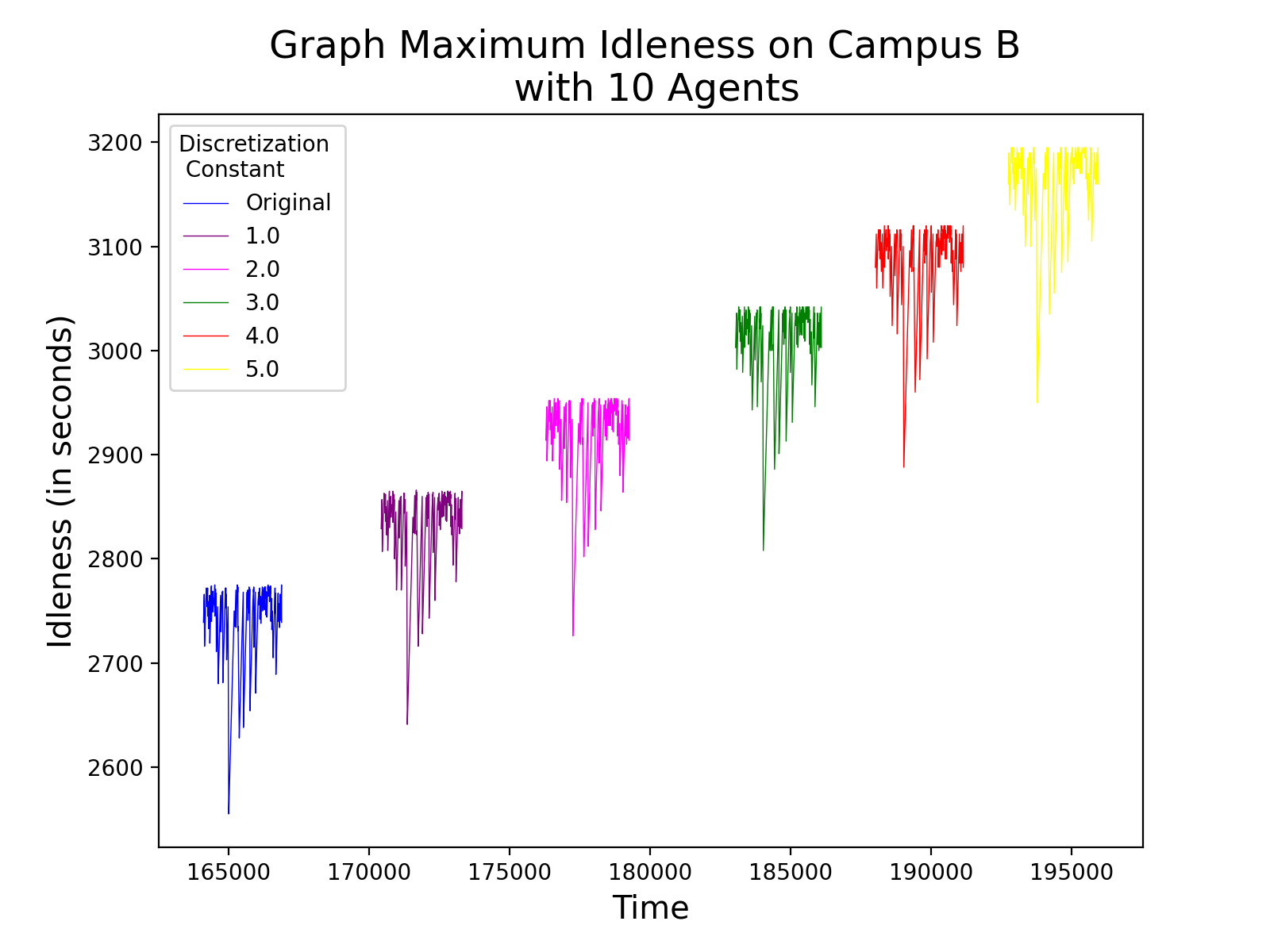}}
    \caption{Graph Maximum Idleness values over time of the recurring portion under original and discrete patrol strategies.}
    \label{fig_mi}
\end{figure}

Figure~\ref{fig_gi} shows the ratios of Graph Average Idleness (GAI) values and Graph Maximum Idleness (GMI) values for various discretization constant $D$, that is, $Ratio = J_\pi(D)/J_\pi$ where $J_\pi(D)$ is the cost (either GAI or GMI) of recurrent patrol strategy $\pi^R$ with discretization constant $D$ and $J_\pi$ is the corresponding cost under the original Greedy-Random patrol strategy $\pi$. Suppose the recurrent patrol strategy is determined by repeating the discrete patrol strategy segment between departure instances $p$ and $q$; the cost $J_\pi$ is calculated over the same segment in $\pi$. For example, in Figure~\ref{fig_mi}, we plot the Graph Maximum Idleness values over time for patrol simulation on Campus B with ten agents. Each blob of data corresponding to different $D$ values is for the same segment from $p$ to $q$ under both original $\pi$ and discrete $\pi^D$ patrol strategies. As the departure times are delayed under $\pi^D$s, we see the offsets for various $D$ values. Since the ratios are below $(1 + \epsilon(D))$ under all simulations, the recurrent patrol strategy obtained using Algorithms~\ref{alg_disc} and~\ref{alg_rec} validates Theorem~\ref{thm_rps}.

Tables~\ref{tab_gai} and~\ref{tab_gmi} contain the Graph Average Idleness (GAI) values and Graph Maximum Idleness (GMI) values for different simulations. In every case, the GAI/GMI value obtained by the recurrent patrol strategy for various values of $D$ is less than $(1 + \epsilon(D))$ times that of the original patrol strategy. For example, the Graph Average Idleness for five agents patrolling on Campus A under Greedy-Random Patrol (Algorithm~\ref{alg_grps}) is $GAI = 949$ whereas the recurrent patrol strategy generated with discretization constant $D = 3$ results in $GAI(3) = 1061$ which is less than $(1 + \epsilon(D)) \times GAI = 1.591 * 949 = 1510$. 

\clearpage
\onecolumn
\begin{longtable}{|c|c|c c|c c|c c|c c|c c|}
\caption{Graph Average Idleness values for all simulations with different discretization constant $D$ values.\\ (1) $J_\pi(D)$- Cost of recurrent patrol strategy $\pi^R_D$\\ (2) $\bar{J}_\pi(D) = (1 + \epsilon(D))J_\pi$- Upper Bound on $J_\pi(D)$ as per Theorem~\ref{thm_rps}}
\label{tab_gai}\\
\hline 
\multirow{3}{*}{\centering $\vert \mathcal{A} \vert$} & \multirow{3}{1.5cm}{\centering Original, $J_\pi$} & \multicolumn{2}{c|}{$D = 1$} & \multicolumn{2}{c|}{$D = 2$} & \multicolumn{2}{c|}{$D = 3$} & \multicolumn{2}{c|}{$D = 4$} & \multicolumn{2}{c|}{$D = 5$} \\  
\cline{3-12}
& & \multirow{2}{*}{$J_{\pi}(1)$} & \multirow{2}{*}{$\bar{J}_\pi(1)$} & \multirow{2}{*}{$J_{\pi}(2)$} & \multirow{2}{*}{$\bar{J}_\pi(2)$} & \multirow{2}{*}{$J_{\pi}(3)$} & \multirow{2}{*}{$\bar{J}_\pi(3)$} & \multirow{2}{*}{$J_{\pi}(4)$} & \multirow{2}{*}{$\bar{J}_\pi(4)$} & \multirow{2}{*}{$J_{\pi}(5)$} & \multirow{2}{*}{$\bar{J}_\pi(5)$} \\
& & & & & & & & & & & \\
\hline 
\endfirsthead

\multicolumn{12}{c}
{{\bfseries \tablename\ \thetable{} -- continued from previous page}} \\
\hline 
\multirow{3}{*}{\centering $\vert \mathcal{A} \vert$} & \multirow{3}{1.5cm}{\centering Original, $J_\pi$} & \multicolumn{2}{c|}{$D = 1$} & \multicolumn{2}{c|}{$D = 2$} & \multicolumn{2}{c|}{$D = 3$} & \multicolumn{2}{c|}{$D = 4$} & \multicolumn{2}{c|}{$D = 5$} \\  
\cline{3-12}
& & \multirow{2}{*}{$J_{\pi}(1)$} & \multirow{2}{*}{$\bar{J}_\pi(1)$} & \multirow{2}{*}{$J_{\pi}(2)$} & \multirow{2}{*}{$\bar{J}_\pi(2)$} & \multirow{2}{*}{$J_{\pi}(3)$} & \multirow{2}{*}{$\bar{J}_\pi(3)$} & \multirow{2}{*}{$J_{\pi}(4)$} & \multirow{2}{*}{$\bar{J}_\pi(4)$} & \multirow{2}{*}{$J_{\pi}(5)$} & \multirow{2}{*}{$\bar{J}_\pi(5)$} \\
& & & & & & & & & & & \\
\hline 
\endhead

\hline \multicolumn{12}{r}{{Continued on next page}} \\ \hline
\endfoot

\hline \hline
\endlastfoot

\multicolumn{12}{c}{Campus A, $\epsilon(1) = 0.197$} \\
\hline
5 & 949 & 986 & 1136 & 1020 & 1323 & 1061 & 1510 & 1092 & 1697 & 1128 & 1884 \\
6 & 950 & 989 & 1137 & 1023 & 1324 & 1062 & 1511 & 1101 & 1699 & 1130 & 1886 \\
7 & 950 & 986 & 1137 & 1016 & 1324 & 1058 & 1511 & 1090 & 1699 & 1122 & 1886 \\
8 & 1006 & 1042 & 1204 & 1075 & 1402 & 1119 & 1601 & 1141 & 1799 & 1192 & 1997 \\
9 & 963 & 998 & 1153 & 1029 & 1342 & 1074 & 1532 & 1095 & 1722 & 1140 & 1912 \\
10 & 1119 & 1162 & 1339 & 1203 & 1560 & 1242 & 1780 & 1284 & 2001 & 1327 & 2221 \\
\hline
\multicolumn{12}{c}{Campus B, $\epsilon(1) = 0.197$} \\
\hline
5 & 1457 & 1501 & 1716 & 1546 & 1976 & 1579 & 2235 & 1632 & 2494 & 1673 & 2754 \\
6 & 1484 & 1531 & 1748 & 1576 & 2012 & 1613 & 2276 & 1663 & 2541 & 1699 & 2805 \\
7 & 1498 & 1543 & 1765 & 1589 & 2031 & 1627 & 2298 & 1672 & 2565 & 1718 & 2831 \\
8 & 1474 & 1514 & 1736 & 1561 & 1999 & 1595 & 2261 & 1647 & 2523 & 1686 & 2786 \\
9 & 1490 & 1535 & 1755 & 1579 & 2020 & 1622 & 2286 & 1658 & 2551 & 1707 & 2816 \\
10 & 1520 & 1564 & 1791 & 1609 & 2061 & 1652 & 2332 & 1689 & 2602 & 1736 & 2873 \\
\hline
\multicolumn{12}{c}{Campus C, $\epsilon(1) = 0.197$} \\
\hline
5 & 1454 & 1500 & 1733 & 1545 & 2012 & 1592 & 2292 & 1656 & 2571 & 1698 & 2850 \\
6 & 1290 & 1329 & 1538 & 1367 & 1785 & 1401 & 2033 & 1462 & 2281 & 1504 & 2528 \\
7 & 1632 & 1680 & 1945 & 1727 & 2259 & 1756 & 2572 & 1849 & 2885 & 1897 & 3199 \\
8 & 1325 & 1361 & 1579 & 1398 & 1834 & 1432 & 2088 & 1489 & 2343 & 1541 & 2597 \\
9 & 1387 & 1427 & 1653 & 1466 & 1920 & 1511 & 2186 & 1570 & 2452 & 1603 & 2719 \\
10 & 1338 & 1379 & 1595 & 1418 & 1852 & 1455 & 2109 & 1514 & 2366 & 1556 & 2622 \\
\hline 
\multicolumn{12}{c}{Campus D, $\epsilon(1) = 0.197$} \\
\hline
5 & 1334 & 1380 & 1562 & 1423 & 1790 & 1470 & 2018 & 1512 & 2246 & 1561 & 2475 \\
6 & 1261 & 1297 & 1477 & 1330 & 1692 & 1376 & 1908 & 1408 & 2124 & 1460 & 2339 \\
7 & 1264 & 1302 & 1480 & 1337 & 1696 & 1379 & 1912 & 1418 & 2129 & 1458 & 2345 \\
8 & 1343 & 1385 & 1573 & 1425 & 1802 & 1472 & 2032 & 1509 & 2262 & 1562 & 2491 \\
9 & 1343 & 1388 & 1573 & 1427 & 1802 & 1475 & 2032 & 1512 & 2262 & 1566 & 2491 \\
10 & 1362 & 1406 & 1595 & 1451 & 1828 & 1495 & 2061 & 1539 & 2294 & 1583 & 2527 \\
\hline 
\multicolumn{12}{c}{Campus E, $\epsilon(1) = 0.197$} \\
\hline
5 & 637 & 651 & 757 & 662 & 878 & 677 & 998 & 694 & 1119 & 701 & 1239 \\
6 & 641 & 653 & 762 & 663 & 883 & 677 & 1004 & 552 & 1126 & 560 & 1247 \\
7 & 643 & 656 & 765 & 666 & 886 & 681 & 1008 & 553 & 1129 & 701 & 1251 \\
8 & 646 & 660 & 768 & 670 & 890 & 685 & 1012 & 703 & 1134 & 706 & 1256 \\
9 & 639 & 653 & 760 & 663 & 881 & 534 & 1001 & 693 & 1122 & 698 & 1243 \\
10 & 645 & 657 & 767 & 667 & 889 & 682 & 1011 & 700 & 1133 & 703 & 1255 \\
\hline
\multicolumn{12}{c}{Campus F, $\epsilon(1) = 0.197$} \\
\hline
5 & 696 & 722 & 825 & 744 & 954 & 780 & 1082 & 798 & 1211 & 827 & 1340 \\
6 & 612 & 631 & 725 & 650 & 838 & 671 & 952 & 690 & 1065 & 711 & 1178 \\
7 & 586 & 607 & 694 & 626 & 803 & 645 & 911 & 663 & 1020 & 690 & 1128 \\
8 & 689 & 716 & 816 & 737 & 944 & 773 & 1071 & 794 & 1199 & 818 & 1326 \\
9 & 663 & 689 & 786 & 710 & 908 & 745 & 1031 & 763 & 1154 & 789 & 1276 \\
10 & 618 & 641 & 732 & 661 & 847 & 688 & 961 & 702 & 1075 & 730 & 1190 \\
\hline
\end{longtable}

\begin{longtable}{|c|c|c c|c c|c c|c c|c c|}
\caption{Graph Maximum Idleness values for all simulations with different discretization constant $D$ values.\\ (1) $J_\pi(D)$- Cost of recurrent patrol strategy $\pi^R_D$\\ (2) $\bar{J}_\pi(D) = (1 + \epsilon(D))J_\pi$- Upper Bound on $J_\pi(D)$ as per Theorem~\ref{thm_rps}}
\label{tab_gmi} \\
\hline 
\multirow{3}{*}{\centering $\vert \mathcal{A} \vert$} & \multirow{3}{1.5cm}{\centering Original, $J_\pi$} & \multicolumn{2}{c|}{$D = 1$} & \multicolumn{2}{c|}{$D = 2$} & \multicolumn{2}{c|}{$D = 3$} & \multicolumn{2}{c|}{$D = 4$} & \multicolumn{2}{c|}{$D = 5$} \\  
\cline{3-12}
& & \multirow{2}{*}{$J_{\pi}(1)$} & \multirow{2}{*}{$\bar{J}_\pi(1)$} & \multirow{2}{*}{$J_{\pi}(2)$} & \multirow{2}{*}{$\bar{J}_\pi(2)$} & \multirow{2}{*}{$J_{\pi}(3)$} & \multirow{2}{*}{$\bar{J}_\pi(3)$} & \multirow{2}{*}{$J_{\pi}(4)$} & \multirow{2}{*}{$\bar{J}_\pi(4)$} & \multirow{2}{*}{$J_{\pi}(5)$} & \multirow{2}{*}{$\bar{J}_\pi(5)$} \\
& & & & & & & & & & & \\
\hline 
\endfirsthead

\multicolumn{12}{c}
{{\bfseries \tablename\ \thetable{} -- continued from previous page}} \\
\hline 
\multirow{3}{*}{\centering $\vert \mathcal{A} \vert$} & \multirow{3}{1.5cm}{\centering Original, $J_\pi$} & \multicolumn{2}{c|}{$D = 1$} & \multicolumn{2}{c|}{$D = 2$} & \multicolumn{2}{c|}{$D = 3$} & \multicolumn{2}{c|}{$D = 4$} & \multicolumn{2}{c|}{$D = 5$} \\  
\cline{3-12}
& & \multirow{2}{*}{$J_{\pi}(1)$} & \multirow{2}{*}{$\bar{J}_\pi(1)$} & \multirow{2}{*}{$J_{\pi}(2)$} & \multirow{2}{*}{$\bar{J}_\pi(2)$} & \multirow{2}{*}{$J_{\pi}(3)$} & \multirow{2}{*}{$\bar{J}_\pi(3)$} & \multirow{2}{*}{$J_{\pi}(4)$} & \multirow{2}{*}{$\bar{J}_\pi(4)$} & \multirow{2}{*}{$J_{\pi}(5)$} & \multirow{2}{*}{$\bar{J}_\pi(5)$} \\
& & & & & & & & & & & \\
\hline 
\endhead

\hline \multicolumn{12}{r}{{Continued on next page}} \\ \hline
\endfoot

\hline \hline
\endlastfoot

\multicolumn{12}{c}{Campus A, $\epsilon(1) = 0.197$} \\
\hline
5 & 1900 & 1977 & 2274 & 2046 & 2649 & 2121 & 3023 & 2188 & 3397 & 2260 & 3772 \\
6 & 1872 & 1941 & 2240 & 2010 & 2610 & 2079 & 2978 & 2152 & 3347 & 2220 & 3716 \\
7 & 1964 & 1940 & 2350 & 2006 & 2738 & 2076 & 3125 & 2148 & 3512 & 2220 & 3899 \\
8 & 1837 & 1910 & 2198 & 1974 & 2561 & 2055 & 2923 & 2116 & 3285 & 2190 & 3646 \\
9 & 1913 & 1989 & 2289 & 2056 & 2667 & 2133 & 3044 & 2200 & 3420 & 2280 & 3797 \\
10 & 3130 & 3255 & 3746 & 3366 & 4363 & 3486 & 4980 & 3604 & 5596 & 3710 & 6213 \\
\hline
\multicolumn{12}{c}{Campus B, $\epsilon(1) = 0.178$} \\
\hline
5 & 2637 & 2726 & 3106 & 2812 & 3576 & 2895 & 4045 & 2972 & 4515 & 3055 & 4984 \\
6 & 2772 & 2867 & 3265 & 2952 & 3759 & 3051 & 4252 & 3124 & 4746 & 3205 & 5239 \\
7 & 2710 & 2800 & 3192 & 2886 & 3675 & 2976 & 4157 & 3048 & 4640 & 3125 & 5122 \\
8 & 2641 & 2725 & 3111 & 2808 & 3581 & 2898 & 4051 & 2976 & 4521 & 3045 & 4991 \\
9 & 2748 & 2838 & 3237 & 2928 & 3726 & 3015 & 4215 & 3096 & 4705 & 3170 & 5194 \\
10 & 2775 & 2866 & 3268 & 2954 & 3763 & 3042 & 4257 & 3120 & 4751 & 3195 & 5245 \\
\hline
\multicolumn{12}{c}{Campus C, $\epsilon(1) = 0.192$} \\
\hline
5 & 3708 & 3810 & 4419 & 3924 & 5132 & 4044 & 5844 & 4200 & 6556 & 4330 & 7268 \\
6 & 2444 & 2524 & 2913 & 2600 & 3382 & 2670 & 3852 & 2780 & 4321 & 2870 & 4790 \\
7 & 3952 & 4074 & 4710 & 4192 & 5470 & 4302 & 6228 & 4472 & 6987 & 4610 & 7746 \\
8 & 2468 & 2546 & 2941 & 2612 & 3416 & 2691 & 3890 & 2784 & 4363 & 2890 & 4837 \\
9 & 2506 & 2585 & 2987 & 2660 & 3468 & 2739 & 3949 & 2848 & 4431 & 2940 & 4912 \\
10 & 2490 & 2569 & 2968 & 2642 & 3446 & 2721 & 3924 & 2824 & 4402 & 2925 & 4880 \\
\hline
\multicolumn{12}{c}{Campus D, $\epsilon(1) = 0.171$} \\
\hline
5 & 4862 & 5032 & 5693 & 5182 & 6525 & 5364 & 7356 & 5508 & 8188 & 5685 & 9019 \\
6 & 2427 & 2506 & 2842 & 2578 & 3257 & 2664 & 3672 & 2732 & 4087 & 2830 & 4502 \\
7 & 4525 & 4630 & 5298 & 4760 & 6073 & 4923 & 6846 & 5064 & 7620 & 5200 & 8394 \\
8 & 4874 & 5041 & 5707 & 5182 & 6541 & 5364 & 7374 & 5496 & 8208 & 5685 & 9041 \\
9 & 4873 & 5041 & 5706 & 5182 & 6540 & 5364 & 7373 & 5492 & 8206 & 5685 & 9039 \\
10 & 4920 & 5089 & 5761 & 5242 & 6603 & 5418 & 7444 & 5568 & 8285 & 5745 & 9127 \\
\hline
\multicolumn{12}{c}{Campus E, $\epsilon(1) = 0.189$} \\
\hline  
5 & 1286 & 1315 & 1529 & 1342 & 1772 & 1374 & 2015 & 1408 & 2258 & 1425 & 2501 \\
6 & 1288 & 1315 & 1531 & 1342 & 1775 & 1374 & 2018 & 1276 & 2262 & 1305 & 2505 \\
7 & 1286 & 1315 & 1529 & 1342 & 1772 & 1374 & 2015 & 1276 & 2258 & 1425 & 2501 \\
8 & 1285 & 1315 & 1527 & 1342 & 1771 & 1374 & 2014 & 1408 & 2256 & 1425 & 2499 \\
9 & 1286 & 1315 & 1529 & 1342 & 1772 & 1239 & 2015 & 1404 & 2258 & 1425 & 2501 \\
10 & 1288 & 1315 & 1531 & 1342 & 1775 & 1374 & 2018 & 1408 & 2262 & 1425 & 2505 \\
\hline
\multicolumn{12}{c}{Campus F, $\epsilon(1) = 0.185$} \\
\hline
5 & 1765 & 1830 & 2091 & 1886 & 2418 & 1959 & 2745 & 2016 & 3071 & 2085 & 3398 \\
6 & 1188 & 1228 & 1407 & 1268 & 1628 & 1308 & 1847 & 1352 & 2067 & 1390 & 2287 \\
7 & 1137 & 1180 & 1347 & 1218 & 1558 & 1257 & 1768 & 1300 & 1978 & 1340 & 2189 \\
8 & 1755 & 1820 & 2079 & 1874 & 2404 & 1947 & 2729 & 2008 & 3054 & 2065 & 3378 \\
9 & 1763 & 1830 & 2089 & 1886 & 2415 & 1959 & 2741 & 2016 & 3068 & 2085 & 3394 \\
10 & 1199 & 1242 & 1420 & 1284 & 1643 & 1329 & 1864 & 1368 & 2086 & 1415 & 2308 \\
\hline
\end{longtable}
\clearpage
\twocolumn
\section{Conclusion} \label{sec_con}
Through this work, we introduce a novel problem formulation, `General Patrol Problem' (Problem~\ref{prob_gpp}) for Multi-Robot Patrolling Problems that generalizes over many existing problems studied in the literature. Our significant contributions are as follows. We establish that an $\epsilon$-approximate recurrent patrol strategy exists for every patrol strategy that solves the General Patrol Problem. The $\epsilon$ factor is a function of a parameter discretization constant $D$ that can be arbitrarily small. We also prove an $\epsilon$-optimal solution to Problem~\ref{prob_gpp} exists. Through this, we establish a key result that aids in obtaining near-optimal solutions to many problem formulations. 
We improve the approximation factor to almost $D/\underline{w}$ where $D$ is a parameter (discretization constant), and $\underline{w}$ is the minimum edge weight of the graph. Hence, $\epsilon$ is independent of the number of patrol agents and the graph size and can be arbitrarily small by selecting $D$ appropriately.

We also describe a systematic approach to derive an $\epsilon$-approximate recurrent patrol strategy for any given patrol strategy as Algorithms~\ref{alg_disc} and~\ref{alg_rec}. We validate our results through extensive simulations.
% Simulation results analyzed in Section~\ref{sec_sim} validate the claims made in Section~\ref{sec_rec}. 
We use Greedy Random Patrol presented in Algorithm~\ref{alg_grps} to generate instances of realistic patrol deployment on different real-life campus environments. The results establish that the $\epsilon$ factor holds under all cases.

Through this work, we have commented on the nature of optimal solutions (their existence and class) to various formulations of the Multi-Robot Patrol Problem. We can re-analyze the approximation factors of algorithms in the literature that generate recurrent patrol strategies. While the class of recurrent patrol strategies is expansive, it is finite and much smaller than that of all feasible patrol strategies. We invite the community to focus on this class of solutions and help narrow it down further. 
% \printbibliography
\bibliographystyle{abbrv}
\bibliography{refer}
\end{document}